\newcommand{\PreserveBackslash}[1]{\let\temp=\\#1\let\\=\temp}
\newcolumntype{C}[1]{>{\PreserveBackslash\centering}p{#1}}
\newcolumntype{R}[1]{>{\PreserveBackslash\raggedleft}p{#1}}
\newcolumntype{L}[1]{>{\PreserveBackslash\raggedright}p{#1}}
\newtheorem{theorem}{\bf Theorem}
\newtheorem{lemma}[theorem]{\bf Lemma}
\newtheorem{definition}{\bf Definition}
\newtheorem{problem}{\bf Problem}
\newtheorem{observation}{\bf Observation}
\begin{document}

\title{REAP: An Efficient Incentive Mechanism for Reconciling Aggregation Accuracy and Individual Privacy in Crowdsensing}

\author{Zhikun Zhang,~\IEEEmembership{Student member,~IEEE,}
        Shibo He,~\IEEEmembership{Member,~IEEE,} \\
        Jiming Chen,~\IEEEmembership{Senior Member,~IEEE,}
        and~Junshan~Zhang,~\IEEEmembership{Fellow,~IEEE,}% <-this % stops a space
\IEEEcompsocitemizethanks{\IEEEcompsocthanksitem Z. Zhang, S. He and J. Chen (Corresponding author) are with State Key Laboratory of Industrial Control Technology, Zhejiang University, and Cyber Innovation Joint Research Center, Hangzhou, China. \protect
% note need leading \protect in front of \\ to get a newline within \thanks as
% \\ is fragile and will error, could use \hfil\break instead.
E-mail: zhangzhk@zju.edu.cn, s18he@iipc.zju.edu.cn, cjm@zju.edu.cn}
%\IEEEcompsocthanksitem J. Doe and J. Doe are with Anonymous University.% <-this % stops a space
%\thanks{Manuscript received April 19, 2005; revised September 17, 2014.}
\IEEEcompsocitemizethanks{\IEEEcompsocthanksitem Junshan. Zhang is School of Electrical, Computer and Energy Engineering, Arizona State University, USA. \protect
% note need leading \protect in front of \\ to get a newline within \thanks as
% \\ is fragile and will error, could use \hfil\break instead.
E-mail: junshan.zhang@asu.edu}
%\IEEEcompsocthanksitem J. Doe and J. Doe are with Anonymous University.% <-this % stops a space
%\thanks{Manuscript received April 19, 2005; revised September 17, 2014.}
}

% \author{
% \IEEEauthorblockN
% {Zhikun Zhang\IEEEauthorrefmark{1},
% Shibo He\IEEEauthorrefmark{1},
% Jiming Chen\IEEEauthorrefmark{1}, ~\IEEEmembership{Senior member,~IEEE} and
% Junshan Zhang\IEEEauthorrefmark{2}, ~\IEEEmembership{Fellow,~IEEE} \\
% % Eldon Tyrell\IEEEauthorrefmark{4},} \\
% \IEEEauthorblockA{\IEEEauthorrefmark{1}State Key Laboratory of Industrial Control Technology, Zhejiang University, China} \\
% \IEEEauthorblockA{\IEEEauthorrefmark{2}School of Electrical, Computer and Energy Engineering, Arizona State University, USA \\
% {\{zhangzhk,~s18he,~cjm\}@zju.edu.cn, junshan.zhang@asu.edu}}
% % \IEEEauthorblockA{\IEEEauthorrefmark{3}Starfleet Academy, San Francisco, CA 96678 USA} \\
% % \IEEEauthorblockA{\IEEEauthorrefmark{4}Tyrell Inc., 123 Replicant Street, Los Angeles, CA 90210 USA}% <-this % stops an unwanted space
% \thanks{Corresponding author: Jiming Chen (email: cjm@zju.edu.cn).}
% }}

\IEEEtitleabstractindextext{
%\vspace{-3pt}
\begin{abstract}
Incentive mechanism plays a critical role in privacy-aware crowdsensing. Most previous studies on co-design of incentive mechanism and privacy preservation assume a trustworthy fusion center (FC). Very recent work has taken steps to relax the assumption on trustworthy FC and allows participatory users (PUs) to add well calibrated noise to their raw sensing data before reporting them, whereas the focus is on the equilibrium behavior of data subjects with binary data. Making a paradigm shift, this paper aim to quantify the privacy compensation for continuous data sensing while allowing FC to directly control PUs. There are two conflicting objectives in such scenario: FC desires better quality data in order to achieve higher aggregation accuracy whereas PUs prefer adding larger noise for higher privacy-preserving levels (PPLs). To achieve a good balance therein, we design an efficient incentive mechanism to REconcile FC's Aggregation accuracy and individual PU's data Privacy (REAP). Specifically, we adopt the celebrated notion of differential privacy to measure PUs' PPLs and quantify their impacts on FC's aggregation accuracy. Then, appealing to Contract Theory, we design an incentive mechanism to maximize FC's aggregation accuracy under a given budget. The proposed incentive mechanism offers different contracts to PUs with different privacy preferences, by which FC can directly control PUs. It can further overcome the \emph{information asymmetry}, i.e., the FC typically does not know each PU's precise privacy preference. We derive closed-form solutions for the optimal contracts in both \emph{complete information} and \emph{incomplete information} scenarios. Further, the results are generalized to the continuous case where PUs' privacy preferences take values in a continuous domain. Extensive simulations are provided to validate the feasibility and advantages of our proposed incentive mechanism.
\end{abstract}

%\vspace{-3pt}

\begin{IEEEkeywords}
Crowd sensing, data aggregation, privacy preservation, incentive mechanism

\end{IEEEkeywords}}

\maketitle

\IEEEdisplaynontitleabstractindextext

\IEEEpeerreviewmaketitle

\maketitle \thispagestyle{empty}

%-----------------------------------------------------------------------------------------
% \input{Abstract}
%-----------------------------------------------------------------------------------------
% \maketitle
% \keywords{Crowd sensing, data aggregation, privacy preservation, incentive mechanism}

%-----------------------------------------------------------------------------------------
\section{Introduction}\label{intr}
\IEEEPARstart{T}{he} recent proliferation of portable mobile devices (e.g., smartphone, smartwatch, tablet computer, etc.), integrated with a set of sensors (e.g., GPS, camera, accelerometer, etc.), has spurred much interest in mobile crowdsensing \cite{duan2016distributed,he2016near}. Due to its advantage in reducing the deployment cost in large-scale sensing applications, crowdsensing has been applied to a large variety of areas such as smart transportation, environmental monitoring, health-care, etc \cite{ganti2010greengps,gao2014jigsaw,cheng2014aircloud,hu2015smartroad}.
%\cite{gao2014jigsaw}.

Typically, sensing data collected from participatory users (PUs) will be aggregated by the fusion center (FC) for data analytics. To identify public health condition, for example, FC can collect the daily exercise data from PUs and carry out data aggregation such as average and histogram. Clearly, contributing sensing data to FC is costly for PUs, since resources such as energy and bandwidth will be consumed and data privacy may be sacrificed. Therefore, they would be reluctant to participate in crowdsensing without a proper incentive mechanism that compensates their cost. Most previous studies focused on resources consumption for data sensing and reporting in incentive mechanism design \cite{luo2014profit,zhang2015incentivize,zhang2015truthful}. Only quite a few consider PUs' privacy losses\cite{jin2016inception,zhang2016privacypreserving} and common assumption made by these works is that FC is trustworthy such that privacy merely breaches when FC releases the aggregation results to the public.

In reality, the trustworthy FC assumption may not hold, e.g., when FC is compromised by malicious attackers, or the communication channels between PUs and FC are eavesdropped. Very recent work \cite{wang2016strategic} take the first attempt to remove the trustworthy authority assumption and study how to trade private data in a game-theoretic model. In \cite{wang2016strategic}, PUs can fully control their privacy by adding well calibrated noise to the raw data before reporting them. However, the private data is assumed to be binary, which is not often times applicable to real-world system. Further, the focus of \cite{wang2016strategic} is on examining the equilibrium behavior of data subjects such that data collector have no direct control of them. Different from \cite{wang2016strategic}, this paper aim to quantify the privacy compensation for continuous data sensing while allowing FC to directly control PUs.

One challenge in doing this is to reconcile the following conflict: PUs prefer adding larger noise for higher privacy preserving levels (PPLs) whereas FC desires better quality data for higher aggregation accuracy. Another challenge is to overcome the \emph{information asymmetry} problem between FC and PUs, since it is difficult (perhaps impossible) to know PUs' privacy preferences. Further, privacy preferences of PUs are typically heterogenous, e.g., women have higher privacy preferences about their age than men, and patients are more concerned about their location privacy, which incur diverse privacy losses for different PUs under the same PPL. An efficient incentive mechanism needs to differentiate the diverse privacy losses of PUs and provide appropriate rewards that capture their contribution to FC without knowing individual PU's precise privacy preference.

To tackle these challenges, we propose REAP\footnote{The name REAP comes from \underline{RE}conciling \underline{A}ggregation accuracy and individual \underline{P}rivacy.}, an efficient incentive mechanism based on Contract Theory. By Contract Theory, FC can add some kind of enforcement to incentivize PUs by signing specific contracts with them, so that FC has direct control over PUs. Different contracts should be designed for different types of PUs, each of which specifies one type of PPL and the corresponding payment that a PU will receive if he/she can sacrifice the given PPL. A key concern here is to design a proper menu of contracts satisfying incentive compatibility such that all PUs can maximize their utilities only when they truthfully reveal their privacy preferences.
% The main difficulty is that the privacy loss depends on the data reporting strategy in a complicated, non-convex way.

Specifically, we adopt differential privacy to quantify individual privacy and $(\alpha,\delta)$-accuracy to measure FC's aggregation accuracy. Then, the quantitative relationship between individual PU's PPL and FC's aggregation accuracy is derived. In light that the contribution of each PU to the aggregation accuracy can be quantified, we design a menu of optimal contracts that maximize FC's aggregation accuracy under a given budget. We first consider the \emph{complete information} scenario as a benchmark, where FC knows the precise type of each PU. This benchmark serves as the best aggregation accuracy that FC can achieve. We further consider the optimal contract design in \emph{incomplete information} scenario where FC only knows the probability distribution of PUs' types. Closed-form solutions for both scenarios are derived. Further, we generalize our results to the continuous case where PUs' privacy preferences can take value in a continuous domain. In such a case, the optimization problem turns out to be a functional extreme value problem that can be solved by an optimal control based approach.

The contributions of this paper are there folds:
\begin{enumerate}
  \item We propose REAP, a Contract Theory based incentive mechanism, to compensate PUs' data privacy losses and hence resolve the information asymmetry issues between PUs and FC.
  \item We adopt proper measures to quantify both individual PUs' PPLs and FC's aggregation accuracy, by which the quantitative relationship between individual privacy and aggregation accuracy is derived.
  \item Closed-form solutions are derived for both complete information and incomplete information scenarios. We also generalize our results to the case of continuous privacy preferences.
\end{enumerate}
% \item Simulations are conducted to corroborate the feasibility and advantages of the proposed incentive mechanism.

The rest of this paper is organized as follows. The related work is discussed in Section \ref{section:relatedwork}. Section \ref{section:systemmodel} presents an overview to the crowdsensing system, and quantify PUs' PPLs as well as their impacts on FC's aggregation accuracy. In Section \ref{section:contractformulation}, we leverage Contract Theory to address the information asymmetry problem and generalize our results to the continuous case in Section \ref{section:continuum}. Simulation results are illustrated in Section \ref{section:simulation} to validate our theoretical results. Section \ref{section:conclusion} concludes this paper.

%-----------------------------------------------------------------------------------------

%-----------------------------------------------------------------------------------------
\section{Related Work}\label{section:relatedwork}
%Privacy is an inevitable concern to data aggregation application. Most of the existing works  assume there existing a trustworthy FC, and the privacy leakage only happened in data aggregation result publish stage. In this paper, we allow participatory users to take full control of their own sensitive data,

%To achieve better aggregation accuracy, the FC can give incentives to induce participatory users to lower down their privacy preserving level. This idea is inspired by the seminal work \cite{ghosh2015selling} of buying privacy. In \cite{ghosh2015selling} designed a auction based.
%

Recently, various incentive mechanisms have been proposed to incentivize users's participation in moblie crowdsensing systems. Most of these mechanisms are based on either auction \cite{yang2015incentive,jin2016inception,zhang2016privacypreserving,jin2015quality,zhang2014free,koutsopoulos2013optimal} or other game-theoretic models \cite{duan2012incentive,luo2015crowdsourcing,peng2015pay,cheung2015distributed,xie2014incentive}, which aim to achieve different objectives. Specifically, in \cite{cheung2015distributed,jin2015quality}, the authors aim to maximize the social welfare. The objective of \cite{duan2012incentive,luo2015crowdsourcing} is to maximize the profit of the platform, and \cite{koutsopoulos2013optimal,xie2014incentive} design mechanisms to minimize FC's payment. The basic requirement of these mechanisms is to guarantee that all users' cost is compensated, at least in the expectation sense. Most previous studies only compensate users' resource consumption for sensing and reporting data, their privacy loss is not remunerated explicitly.

Interestingly, Ghosh et al. took the first step to view privacy as a good and aim to compensate users' privacy loss in their seminal work \cite{ghosh2011selling} in data mining field. In \cite{ghosh2011selling}, data owners bid their privacy loss based on their privacy preference, and the system chooses a set of users and the corresponding PPLs to achieve the best statistic accuracy under a given budget. Based on this work, a few improved mechanisms \cite{fleischer2012approximately,ligett2012take,nissim2014redrawing} have been proposed, especially consider the correlation between privacy preference and private data. Most of these mechanisms require a trustworthy authority, which is not available in most cases. Recently, Wang et. al. \cite{wang2016strategic} removed the trustworthy authority assumption in data mining field and proposed a game-theoretic approach to compensate users' privacy loss. However, the private data considered in \cite{wang2016strategic} is always binary bit, which is not widely applicable in mobile crowdsensing systems. Further, \cite{wang2016strategic} do not consider the information asymmetry problem between FC and PUs. Thus motivated, in this paper, we consider a more realistic crowdsensing scenario where the FC is untrusted and allow PUs to take full control of their private data, which take continuous value. Moreover, a novel incentive mechanism based on Contract Theory is proposed to handle the information asymmetry problem.

Another line of related work is privacy-preserving mechanism design in mobile crowdsensing systems. These works do not take users' data privacy into consideration. Instead, they consider the privacy issue of the mechanism itself. For example, \cite{li2014providing,li2013providing} aimed to preserve users' anonymity within the incentive mechanism, and \cite{jin2016enabling} aimed to preserve users' bid privacy.

%-----------------------------------------------------------------------------------------

%-----------------------------------------------------------------------------------------
\section{System Model}\label{section:systemmodel}
In this section, we first present the system overview. Then, we quantify PUs' PPLs and their impacts on FC's aggregation accuracy.

\subsection{System Overview}
The mobile crowdsensing system considered in this paper consists of an untrusted FC, a task agent and a set $\mathcal{U}=\{u_1,u_2,\cdots,u_n\}$ of PUs as shown in Fig. \ref{fig:systemmodel}. Different from most of the previous works on privacy-preserving data aggregation in crowdsensing, we remove the trustworthy FC assumption, since FC may be compromised by malicious attackers, or the communication channels between PUs and FC maybe eavesdropped.

The FC aims to collect a set of sensing data from $n$ PUs, denoted as $\mathcal{D}=\{d_1,d_2,\cdots,d_n\}$,  where $d_i \in \mathbb{R}$ is a real number. Then it carries out some aggregation operations, such as average, max/min, histogram, etc, to abstract some valuable patterns. For easy exposition, we will investigate the average aggregation\footnote{We leave the discussion of other kinds of data aggregations in future work}, i.e., $s=\frac{1}{n}\sum_{i=1}^nd_i$, which constitute a large portion of currently deployed crowdsensing system. For example, some map application such as Baidu map collect GPS data (e.g., location and speed) from mobile vehicles and conduct average aggregation to monitor the real-time traffic condition. In the healthcare application, FC intends to collect PUs' daily exercise data and conduct average aggregation to monitor public health condition.

Clearly, the sensing data may contain sensitive information about PUs. Abuse of these sensitive information may breach PUs' privacy. Considering the healthcare application, the exercise data allow adversaries to infer individual PU's health condition or living habit. Therefore, PUs may not be willing to contribute their raw sensing data due to the privacy concern. To dispel PUs' worry about privacy, we propose to allow for PUs to add well-calibrated noise $\eta_i$ to their raw sensing data $d_i$ before reporting them to the FC, and their PPLs can be strictly quantified by differential privacy as depicted in Section \ref{subsection:dp}. 
% Further, PUs can fully control the PPLs of their sensitive sensing data.
% To stimulate PUs' participation, we need an incentive mechanism to compensate their privacy losses for reporting sensing data, and PUs can fully control their data privacy by reporting noise data with proper PPLs, which will depicted in Section \ref{subsection:dp}.
\begin{figure}[!ht]
\begin{center}
\includegraphics[width=0.35\textwidth]{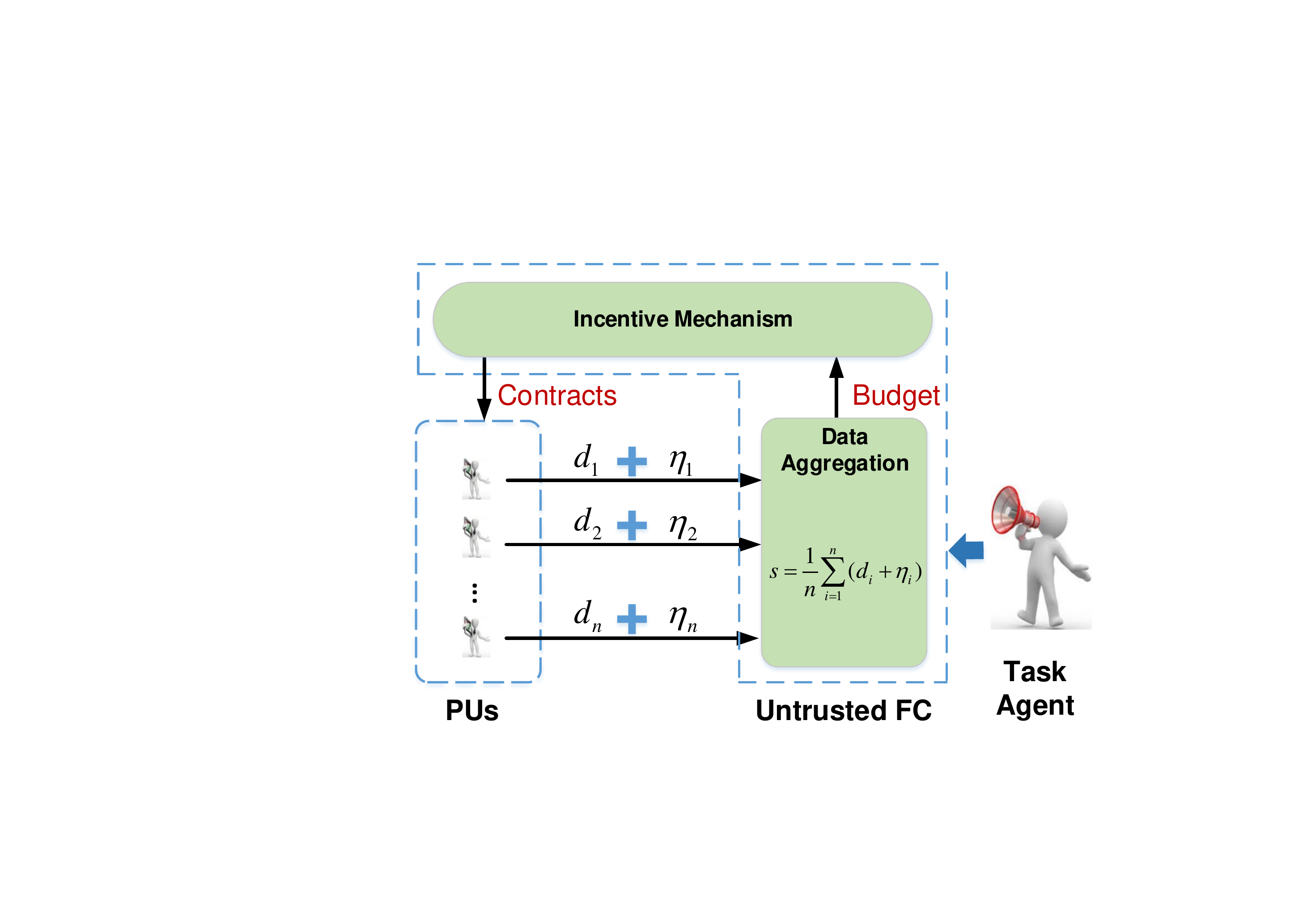}
\end{center}
\caption{Framework of REAP.} \label{fig:systemmodel}
\end{figure}
%\vspace{-20pt}

However, there are two conflicting objectives in this setting: FC desires better quality data in order to achieve higher aggregation accuracy whereas PUs prefer adding larger noise for higher PPLs (these conflicts will further be quantified in Section \ref{subsection:accuracy}). In this paper, we aim to design an efficient mechanism to reconcile these conflicts. The framework of the proposed crowdsensing system is shown in Fig. \ref{fig:systemmodel} and the workflow is as follows:
\begin{itemize}
  \item Firstly, the task agent announces a sensing task to the FC.
  \item \textbf{Incentive Mechanism.} Then, FC designs a menu of contract items (each specifies a privacy-payment pair) that maximize the aggregation accuracy under given budget, and broadcast them to all PUs. PUs can choose to sign any one of the contract that maximize their own utilities. Once the contract is signed, PUs must report a privacy-preserving version of their sensing data with the PPLs specified in the contracts. In return, they will receive the corresponding payments.
  \item \textbf{Data Aggregation.} Next, after receiving the privacy-preserving sensing data from PUs, FC conduct average aggregation on these data.
  \item Finally, FC return the aggregated data to the task agent.
\end{itemize}

% Further, FC is assumed to have a hard budget constraint since resources for data fusion are typically limited.
%For example, in case of a FC have $B$ dollars to carry out the aggregation task, it aims to collect more accurate sensing data that it can afford.

%From user's perspective, they are willing to add as much noise as possible to preserve their privacy without any incentive. Therefore, system designer may provide some monetary incentive to users to achieve better accuracy. However, incentive provided to users is operation cost for system. Thus, the system designer should devise proper incentive mechanism to tradeoff statistic privacy and operation cost.

% \begin{remark}
% In order to devise an efficient incentive mechanism to reconcile FC's aggregation accuracy and individual PUs' data privacy, we will seek to answer the following two problems:
% \begin{itemize}
%   \item How to find the quantitative relationship between individual PU's PPL and FC's aggregation accuracy, which is the prerequisite for the incentive mechanism design?
%   \item How to design an efficient mechanism to achieve best aggregation accuracy under a given budget?
% \end{itemize}
% \end{remark}

\subsection{Differentially Private Data Reporting} \label{subsection:dp}
In this subsection, we adopt the celebrated notion of differential privacy \cite{dwork2008differential} to quantify individual PU's PPL and privacy loss, and then define PUs' utility function. 

Informally, differential privacy guarantees that, after receiving the observation, the attackers cannot distinguish the neighboring input with high confidence. Here, neighboring relationship is an important concept in differential privacy. In this paper, we adopt the neighboring relationship for continuous value as follows:
\begin{definition}[$\gamma_i$-adjacency]\label{definition:adjacency}
Two continuous data $d_i$ and $d_i'$ are $\gamma_i$-adjacency, if $|d_i-d_i'|\leq \gamma_i$, where $\gamma_i$ is the range of PU $i$'s sensing data $d_i$.
\end{definition}
%In this adjacency definition, we adopt $1$-norm for the distance definition, while the standard differential privacy adopt Hamming distance \cite{dwork2006calibrating}.

Then, we can give the formal definition of differential privacy.
\begin{definition}[$\epsilon_i$-differential privacy \cite{le2014differentially}]\label{definition:dp}
A random algorithm $\{\mathcal{A}:R \rightarrow R|\mathcal{A}(d_i)=d_i+\eta_i\}$ achieves $\epsilon_i$-differential privacy, if for all pairs of $\gamma_i$-adjacency data $d_i$ and $d_i'$, and observation $d^{obs}$,
\begin{align}\label{formula:dp}
Pr[\mathcal{A}(d_i)=d^{obs}]\leq e^{\epsilon_i}Pr[\mathcal{A}(d_i')=d^{obs}].
\end{align}
\end{definition}

Intuitively, PU $i$'s accurate sensing data can be either $d_i$ or $d_i'$ from an attacker's view. After adding noise $\eta_i$, both $d_i$ and $d_i'$ can result in $d^{obs}$ with certain probability. Thus, an attacker cannot distinguish PU $i$'s accurate sensing data with high confidence when he observe $d^{obs}$. Clearly, smaller $\epsilon_i$ means higher PPL, since it is harder to distinguish $d_i$ and $d_i'$ when observing $d^{obs}$.

The Laplacian mechanism \cite{dwork2006calibrating} is the first and probably most widely used mechanism for achieving differential privacy, it satisfies $\epsilon_i$-differential privacy by calibrating the Laplacian noise parameter based on the following lemma:
\begin{lemma}\label{laplacian}
If the Laplacian mechanism is used, i.e., $\eta_i\sim Lap(0,b_i)$, we can achieve $\epsilon_i$-differential privacy if $b_i=\frac{\gamma_i}{\epsilon_i}$.
\end{lemma}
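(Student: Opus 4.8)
The plan is to write out the output density of the Laplacian mechanism in closed form and then bound the likelihood ratio appearing in Definition~\ref{definition:dp} via the reverse triangle inequality. Recall that $\eta_i\sim Lap(0,b_i)$ has density $f(x)=\frac{1}{2b_i}\exp(-|x|/b_i)$. Since $\mathcal{A}(d_i)=d_i+\eta_i$, the event $\mathcal{A}(d_i)=d^{obs}$ corresponds to $\eta_i=d^{obs}-d_i$, so the density of $\mathcal{A}(d_i)$ evaluated at $d^{obs}$ equals $\frac{1}{2b_i}\exp(-|d^{obs}-d_i|/b_i)$. (I would interpret the quantities $Pr[\mathcal{A}(d_i)=d^{obs}]$ in~\eqref{formula:dp} as these density values, or equivalently integrate over an arbitrary measurable observation set; the argument is unchanged either way.)

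Next I would form the likelihood ratio for two $\gamma_i$-adjacent inputs $d_i$ and $d_i'$:
\begin{align}
\frac{Pr[\mathcal{A}(d_i)=d^{obs}]}{Pr[\mathcal{A}(d_i')=d^{obs}]}=\exp\!\left(\frac{|d^{obs}-d_i'|-|d^{obs}-d_i|}{b_i}\right).
\end{align}
The crux is to control the exponent: by the reverse triangle inequality, $|d^{obs}-d_i'|-|d^{obs}-d_i|\le |d_i-d_i'|$, and by Definition~\ref{definition:adjacency}, $|d_i-d_i'|\le\gamma_i$. Hence the ratio is at most $\exp(\gamma_i/b_i)$. Substituting $b_i=\gamma_i/\epsilon_i$ yields $\exp(\gamma_i/b_i)=\exp(\epsilon_i)$, which is precisely the bound required by Definition~\ref{definition:dp}; since $d^{obs}$ and the adjacent pair were arbitrary, this establishes $\epsilon_i$-differential privacy.

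I do not anticipate a genuine obstacle here, as this is the classical calibration argument for the Laplacian mechanism. The only point meriting a little care is the continuous-density reading of the ``probabilities'' in~\eqref{formula:dp}: strictly, $Pr[\mathcal{A}(d_i)=d^{obs}]=0$ for a continuous output, so one should either declare at the outset that these symbols denote probability densities, or phrase the inequality over an arbitrary output event $S$ and integrate the pointwise density bound over $S$. With that convention fixed, the computation above goes through verbatim.
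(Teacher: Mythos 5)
Your proof is correct and is exactly the classical calibration argument for the Laplace mechanism: bound the density ratio by $\exp\bigl(|d_i-d_i'|/b_i\bigr)\le\exp(\gamma_i/b_i)=\exp(\epsilon_i)$ via the reverse triangle inequality and the $\gamma_i$-adjacency definition. The paper itself gives no proof of this lemma (it defers to the cited reference on calibrating noise to sensitivity), and your argument, including the careful remark that $Pr[\mathcal{A}(d_i)=d^{obs}]$ must be read as a density or integrated over an output event, is precisely the standard derivation the paper is implicitly invoking.
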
 

% Lemma \ref{laplacian} indicates that $PU_i$ achieves $\epsilon_i$-differential privacy when he add Laplacian noise with parameter $b_i$ to his raw sensing data. 

By differential privacy, we can also define PUs' privacy loss. According to the utility theoretic characterization of differential privacy \cite{pai2013privacy}, the relationship between the expected utilities with two adjacent data can be characterized by $e^{\epsilon_i}$ based on (\ref{formula:dp}). Following \cite{ghosh2011selling}, the privacy loss can be modeled as the difference between the utility with true data and the utility with perturbed data, which is a linear function of $\epsilon_i$ when it is small. Since $e^{\epsilon_i} \approx 1 + \epsilon_i$ for small value of $\epsilon_i$. Then, we can define PUs' utility in Definition \ref{definition:pu_utility}.

% Formally, we define PUs' privacy loss as:
% \begin{definition}[Privacy loss]
% When contributing data to the FC, PU $i$'s privacy loss under a random algorithm, which achieves $\epsilon_i$-differential privacy, is given by \cite{ghosh2011selling}
% \begin{align*}
% c_i=\theta_i\epsilon_i,
% \end{align*}
% \end{definition}
% where $\theta_i$ is PU $i$'s privacy preference, which characterise PUs' concern level to their privacy. Then, we can modify PU $i$'s utility as $u_i=p_i-\theta_i\epsilon_i$.

\begin{definition}[PUs' utility] \label{definition:pu_utility}
Any $PU_i$'s utility is defined as
\begin{align}\label{formula:pu_utility}
u_i=p_i - \theta_i\epsilon_i, 
\end{align} 
\end{definition}

where $p_i$ is PU $i$'s reward when he/she contribute sensing data to FC. $\theta_i$ is the privacy preference of PU $i$ which indicate how much PUs care about their privacy. Clearly, different PUs may have different privacy preferences \cite{xu2015privacy}, for instance, patients in hospital have higher privacy preference to their location than others. Naturally, individual PU's privacy preference is private information and unknown to FC, or in other words, there exists \emph{information asymmetry} between FC and PUs.

Notice that we only consider the cost incurred by PUs' privacy loss in order to ease the presentation in this paper, meanwhile the result in this paper can be extended to incorporate the sensing cost. For instance, similar to \cite{zhang2016privacypreserving}, setting PU $i$'s sensing cost to $s_i$, we can rewrite PU $i$'s utility as $u_i = p_i - s_i - c_i$ and define $p_i' = p_i - s_i$ to incorporate the sensing cost in the payment.

\subsection{Privacy versus Accuracy} \label{subsection:accuracy}
In this subsection, we illustrate the conflicts between FC's aggregation accuracy PUs' PPLs by deriving their quantitative relationship. 

To quantify the aggregation accuracy of the privacy preserving sensing data, we adopt the following accuracy definition.
\begin{definition}[$(\alpha,\delta)$-accuracy]
The aggregation $\hat{s}$ of privacy-preserving sensing data achieves $(\alpha,\delta)$-accuracy if
\begin{align*}
Pr[|\hat{s}-s|\geq\alpha] \leq 1-\delta,
\end{align*}
\end{definition}
where $s$ is the aggregation result of accurate sensing data.

Intuitively, this definition indicates that the aggregation error is larger than $\alpha$, with probability at most $1-\delta$. From estimation's perspective, $\alpha$ stands for confidence interval and $\delta$ stands for confidence level. Clearly, for a given confidence level, a smaller confidence interval means better aggregation accuracy. Thus, we can leverage the confidence interval $\alpha$ under a certain confidence level to measure the \emph{aggregation accuracy}, where a smaller $\alpha$ means better aggregation accuracy.

Then, we derive the quantitative relationship between individual PU's privacy and FC's aggregation accuracy as the following lemma:
\begin{lemma}\label{lemma:relationship}
For a given confidence level $\delta\leq1$, the aggregation accuracy $\alpha$ of the privacy-preserving sensing data can be found as
\begin{align}\label{formula:relationship}
\alpha = \frac{\sqrt{2}\gamma}{n\sqrt{1-\delta}}\sqrt{\sum_{i=1}^n\frac{1}{\epsilon_i^2}}.
\end{align}
\end{lemma}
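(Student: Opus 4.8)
The plan is to compute the distribution of the aggregation error $\hat{s}-s$ and then apply a concentration (or, more precisely, a Chebyshev-type) bound to read off $\alpha$ for a prescribed confidence level $\delta$. The starting point is that each PU reports $\tilde{d}_i = d_i + \eta_i$ with $\eta_i \sim Lap(0,b_i)$, and by Lemma~\ref{laplacian} achieving $\epsilon_i$-differential privacy forces $b_i = \gamma_i/\epsilon_i$; I will take the common-range case $\gamma_i = \gamma$ since the statement of Lemma~\ref{lemma:relationship} uses a single $\gamma$. Since $\hat{s} = \frac1n\sum_i \tilde{d}_i$ and $s = \frac1n\sum_i d_i$, we get $\hat{s}-s = \frac1n\sum_{i=1}^n \eta_i$, a weighted sum of independent zero-mean Laplace variables.

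Next I would use that $\mathrm{Var}(\eta_i) = 2b_i^2 = 2\gamma^2/\epsilon_i^2$, so by independence $\mathrm{Var}(\hat{s}-s) = \frac{1}{n^2}\sum_{i=1}^n 2\gamma^2/\epsilon_i^2 = \frac{2\gamma^2}{n^2}\sum_{i=1}^n \frac{1}{\epsilon_i^2}$, and the mean of $\hat{s}-s$ is $0$. Then applying Chebyshev's inequality, $Pr[\,|\hat{s}-s|\geq \alpha\,] \leq \mathrm{Var}(\hat{s}-s)/\alpha^2$, and requiring the right-hand side to equal the target failure probability $1-\delta$ gives $\alpha^2 = \frac{2\gamma^2}{n^2(1-\delta)}\sum_{i=1}^n \frac{1}{\epsilon_i^2}$, i.e. exactly $\alpha = \frac{\sqrt2\,\gamma}{n\sqrt{1-\delta}}\sqrt{\sum_{i=1}^n 1/\epsilon_i^2}$, which is \eqref{formula:relationship}. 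So structurally the proof is: (i) reduce the error to a sum of Laplace noises, (ii) compute its variance, (iii) invoke Chebyshev, (iv) solve for $\alpha$.

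The main subtlety — and the step I expect to need the most care — is the concentration argument in step (iii). Chebyshev's inequality yields precisely the claimed closed form, so I anticipate that is the intended route; a sharper Chernoff/Bernstein bound for sums of Laplace variables would give an extra logarithmic factor and would not reproduce the clean $\sqrt2/\sqrt{1-\delta}$ constant, so I would deliberately stay with the second-moment bound. I would also note the direction of the inequality: Chebyshev gives $Pr[|\hat s - s|\ge\alpha]\le \mathrm{Var}/\alpha^2$, and setting $\mathrm{Var}/\alpha^2 = 1-\delta$ produces an $\alpha$ for which $(\alpha,\delta)$-accuracy holds, which is all that is required. The only genuine modelling assumptions being used are independence of the $\eta_i$ across PUs and the common sensing range $\gamma$; both are consistent with the system model, so no further hypotheses are needed.
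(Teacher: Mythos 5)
Your proposal is correct and follows essentially the same route as the paper's own proof in Appendix~A: reduce the error to $\frac{1}{n}\sum_i\eta_i$, compute its variance $\frac{2}{n^2}\sum_i b_i^2$ with $b_i=\gamma/\epsilon_i$, apply Chebyshev's inequality, and set the bound equal to $1-\delta$ to solve for $\alpha$. No gaps; your added remarks on independence, the common range $\gamma$, and why a Chernoff-type bound would not reproduce the stated constant are sensible but not needed beyond what the paper does.
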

where $\epsilon_i$ is PU $i$'s PPL, $n$ is the number of PUs, and $\gamma$ is the range of PUs' sensing data\footnote{Notice that the range of the sensing data should be the same for all PUs in a specific crowdsensing application, for example, the heart rate of a normal adult is always in the range $60\sim100$ bpm. Thus, all PUs' $\gamma_i$ should take the same value, i.e., $\gamma_i=\gamma,\forall\gamma_i$.}. The proof can be found in Appendix \ref{appendix:1}.

Recall that a smaller $\epsilon_i$ and $\alpha_i$ means higher PPL and aggregation accuracy, by examining Formula (\ref{formula:relationship}), we can see that the FC and PUs have conflicting objectives. The FC wants PUs to adopt lower PPLs, which increases FC's aggregation accuracy. PUs want to adopt higher PPLs to better preserve their privacy, which decrease FC's aggregation accuracy. In the next section, we resolve this conflict through Contract Theory. 
\section{Incentive mechanism design: A Contract Theoretic Approach} \label{section:contractformulation}
% In this section, we first leverage contract theory to formulate our problem. Then, we introduce the optimal contract design under both complete information and incomplete information scenarios.
So far, we have quantified the conflicts between PUs' privacy and FC's aggregation accuracy. In this section, we introduce the contract mechanism to resolve the conflicting objectives between PUs and FC.

\subsection{Contract Formulation}

Contract theory generally studies how economic decision-makers construct contractual arrangement in the presence of \emph{information asymmetry}, i.e., FC typically does not know each PUs' privacy preference $\theta_i$, and aim to design a menu of contracts to incentivize PUs to participate in crowdsensing to maximize the aggregation accuracy. To facilitate later discussion, we classify PUs into different types based on their privacy preferences, i.e., the privacy preference of type-i PUs is $\theta_i$.

In this section, we consider the case where PUs have finite types of privacy preference, say $k$ types $\Theta=\{\theta_i, \theta_2, \cdots\, \theta_k\}$, and provide some insight to the contract design. We leave the discussion of the case where $\theta$ takes continuous value in the next section. To facilitate the analysis, we sort PUs' types in ascending order, i.e., $\theta_1\leq\theta_2\leq\cdots\leq\theta_k$, i.e., a higher type of PU has a higher privacy preference. Using Contract theory, FC designs a contract that specifies the relationship between a PU's PPL $\epsilon_i$ and the corresponding payment $p_i$ that a PU will receive if he/she can sacrifice the given PPL. Specifically, a contract is a set $\mathcal{C} = \{(\epsilon_1, p_1), \cdots, (\epsilon_k, p_k)\}$ of privacy-payment pairs called contract items. Each PU choose to sign a contract item $(\epsilon_i, p_i)$ and report $\epsilon_i$-differentially private sensing data for the payment $p_i$. Once the contract is signed, a PU must report a privacy-preserving version of sensing data and FC must reward him according to the item.

Each type of PUs choose the contract item that maximizes their utilities in (\ref{formula:pu_utility}). FC aims to optimize the contract and maximize the aggregation accuracy, i.e., minimize $\alpha$ in (\ref{formula:relationship}). Since $\frac{\sqrt{2}\gamma}{n\sqrt{1-\delta}}$ is a positive constant, minimizing $\alpha = \frac{\sqrt{2}\gamma}{n\sqrt{1-\delta}}\sqrt{\sum_{i=1}^n\frac{1}{\epsilon_i^2}}$ is equivalent to minimize $\alpha = \sum_{i=1}^n\frac{1}{\epsilon_i^2}$.
%The FC can design a set of contracts specify the privacy preserving level $\epsilon_k$ and the corresponding payment $p_k$ for type-$k$ users.

In the following subsection, we will consider the optimal contract design under two information scenarios.
\begin{itemize}
  \item \textbf{Complete information:} The complete information scenario is served as a benchmark, where FC knows each PU's precise type, and can offer a specific contract to each PU directly. Clearly, FC can achieve the best aggregation accuracy in this scenario, which serves as the upper bound of FC's achievable aggregation accuracy in any information scenario.
  \item \textbf{Incomplete information:} In the incomplete information scenario, the FC do not know each PU's precise type, but know the distribution of each type, e.g., type-$i$ has $\lambda_i$ PUs. In this scenario, FC should decide and broadcast a menu of optimal contracts to all PUs, and each PU can choose the contract that maximize his/her utility.
\end{itemize}

\subsection{Optimal Contract Design under Complete Information}
In the complete information scenario, FC knows each PU's precise type. We will leverage the optimal aggregation accuracy achieved in this case as a benchmark to evaluate the performance of the proposed contract under incomplete information scenario. As FC knows each PU's type, it can offer a specific contract to each PU directly. In this scenario, FC only need to guarantee that each PU's utility is nonnegative so that they are willing to contribute their sensing data. In Contract Theory, we call this individual rationality constraint.
\begin{definition}[Individual Rationality] \label{definition:ir}
A menu of \\contracts satisfy Individual Rationality (IR) constraint if they provide nonnegative utility to all PUs, i.e., %$p_i-\theta_i\epsilon_i\geq0,\forall i$.
\begin{align}
p_i-\theta_i\epsilon_i\geq0,\forall i.
\end{align}
\end{definition}
%Moreover, the optimal contract should satisfy the budget constraint as
%\begin{align}\label{formula:budget}
%\sum_{i=1}^k\lambda_ip_i\leq B.
%\end{align}

Thus, we can design the optimal contract under complete information by solving the following optimization problem:
\begin{problem}\label{problem:systemgoalcom}
\begin{align}
&\min\ \sum_{i=1}^k\frac{\lambda_i}{\epsilon_i^2}, \nonumber \\
&s.t.\ \sum_{i=1}^k\lambda_ip_i\leq B, \label{formula:budget}\\
&\ \ \ \ p_i-\theta_i\epsilon_i\geq0, \ \ \forall i. \label{formula:IR}
\end{align}
\end{problem}
where $B$ is the total budget that FC possesses.

Then, we provide the solution to this optimization problem.
\begin{lemma}\label{lemma:equality}
The inequality in (\ref{formula:budget})(\ref{formula:IR}) can take the equal sign simultaneously, i.e., $\sum_{i=1}^k\lambda_ip_i=B$ and $p_i-\theta_i\epsilon_i=0$.
\end{lemma}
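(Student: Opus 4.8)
The plan is to argue by contradiction in two stages, first forcing the IR constraints to bind and then forcing the budget constraint to bind. Suppose we have an optimal solution $\{(\epsilon_i, p_i)\}_{i=1}^k$ to Problem~\ref{problem:systemgoalcom}. Observe that the objective $\sum_{i=1}^k \lambda_i/\epsilon_i^2$ is strictly decreasing in each $\epsilon_i$, so intuitively we want each $\epsilon_i$ as large as possible, and the only thing stopping us from enlarging $\epsilon_i$ is the budget (through the payments) — so every dollar of budget should be spent, and no payment should be larger than what IR forces. Concretely, suppose some IR constraint is slack, say $p_j - \theta_j \epsilon_j > 0$ for some index $j$. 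Then I would decrease $p_j$ slightly to a new value $p_j'$ with $p_j' - \theta_j \epsilon_j \geq 0$ still holding; this keeps all constraints feasible (the budget constraint only becomes looser) and frees up budget $\lambda_j(p_j - p_j') > 0$ without changing the objective yet.

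Next, with the freed-up budget — or, if all IR constraints were already tight, with the assumed slack $B - \sum_i \lambda_i p_i > 0$ in the budget constraint — I would show we can strictly improve the objective, contradicting optimality. The idea is to pick any index $m$ (say $m = k$) and increase $\epsilon_m$ to $\epsilon_m + \Delta$ for a small $\Delta > 0$; to maintain IR for type $m$ we must raise its payment to at least $\theta_m(\epsilon_m + \Delta)$, which costs an extra $\lambda_m \theta_m \Delta$ in budget. For $\Delta$ small enough this extra cost is covered by the available budget slack, so the new point is feasible. But increasing $\epsilon_m$ strictly decreases the term $\lambda_m/\epsilon_m^2$ and leaves the other terms unchanged, so the objective strictly decreases — contradicting that we started from an optimum. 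Hence at an optimum every IR constraint binds, $p_i = \theta_i \epsilon_i$ for all $i$, and then the budget constraint must bind as well, $\sum_{i=1}^k \lambda_i p_i = B$.

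One technical point to handle carefully: the case analysis must be done in the right order, since a slack IR constraint could in principle coexist with a tight budget constraint. The clean way is the two-step argument above — first eliminate any slack in IR (which strictly loosens the budget or leaves it unchanged, never tightens it), and only then, once all IR constraints bind, argue the budget must bind. After the first step, if the budget were still slack we reach the contradiction in the second step; if the budget became slack only because of the first step, the same second-step perturbation applies. I would also note that $\epsilon_i$ should be taken to range over positive reals (or a positive interval), so that increasing $\epsilon_m$ by a small $\Delta$ stays in the feasible domain; this is implicit in the model since $\epsilon_i = 0$ would make the objective infinite and is never optimal. I expect the main obstacle to be nothing deep but rather presenting the perturbation cleanly so that feasibility of all constraints after each perturbation is transparent; the monotonicity of the objective does all the real work.
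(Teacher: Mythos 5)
Your proof is correct and follows essentially the same perturbation-by-contradiction argument as the paper, which exploits that the objective is strictly decreasing in each $\epsilon_i$. The only cosmetic difference is that the paper enlarges $\epsilon_i$ directly (holding $p_i$ fixed) when an IR constraint is slack, whereas you first lower $p_i$ and then reallocate the freed budget; your two-stage ordering is if anything a slightly more careful write-up of the same idea.
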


%\begin{proof}
It is easy to show that both (\ref{formula:budget}) and (\ref{formula:IR}) can take the equal sign by contradiction. Given $p_i$, if there exists an optimal contract that satisfies $p_i-\theta_i\epsilon_i>0$, then we can always find a larger $\epsilon_i$ to achieve better aggregation accuracy until the equality satisfies. Similarly, If there exists an optimal contract that satisfies $\sum_{i=1}^k\lambda_ip_i<B$, we can always find a larger $p_i$, which means larger $\epsilon_i$, to achieve better aggregation accuracy until the equality satisfies, which lead to the correctness of this lemma.
%\end{proof}

Lemma \ref{lemma:equality} shows that both IR constraints and budget constraint are tight at the optimal solution to Problem (\ref{problem:systemgoalcom}), which indicate that the FC can provide a zero utility to each type-i PU with $p^* = \theta_i \epsilon^*$ and spend all the feasible budget. Therefore, Problem \ref{problem:systemgoalcom} can be reduced to the following problem:
\begin{problem}\label{problem:systemgoalcom1}
\begin{align}
&\min\ \sum_{i=1}^k\frac{\lambda_i}{\epsilon_i^2}, \nonumber \\
&s.t.\ \sum_{i=1}^k\lambda_ip_i=B, \label{formula:budget1}\\
&\ \ \ \ p_i-\theta_i\epsilon_i=0, \ \ \forall i. \label{formula:IR1}
\end{align}
\end{problem}

By solving Problem \ref{problem:systemgoalcom1}, we have the following theorem.

\begin{theorem}\label{theorem:complete}
In the complete information scenario, the optimal contract $\{\epsilon_i^*,p_i^*\}$ is given by
\begin{align}
\epsilon_i^*&=\frac{B}{\sum_{j=1}^k\lambda_j\theta_j^{\frac{2}{3}}}\theta_i^{-\frac{1}{3}}, \label{formula:optcontractcomepsilon} \\
p_i^*&=\frac{B}{\sum_{j=1}^k\lambda_j\theta_j^{\frac{2}{3}}}\theta_i^{\frac{2}{3}}. \label{formula:optcontractcomp}
\end{align}
\end{theorem}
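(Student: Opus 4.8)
The plan is to solve Problem \ref{problem:systemgoalcom1} directly by Lagrange multipliers, since after Lemma \ref{lemma:equality} both constraints are equalities and the only remaining variables are the PPLs $\epsilon_i$ (the payments being pinned down by $p_i = \theta_i\epsilon_i$). First I would substitute $p_i = \theta_i\epsilon_i$ into the budget equality \eqref{formula:budget1} to obtain the single scalar constraint $\sum_{i=1}^k \lambda_i\theta_i\epsilon_i = B$, so that the problem becomes: minimize $\sum_{i=1}^k \lambda_i\epsilon_i^{-2}$ over positive $\epsilon_i$ subject to $\sum_{i=1}^k \lambda_i\theta_i\epsilon_i = B$.

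Next I would form the Lagrangian $\mathcal{L} = \sum_{i=1}^k \lambda_i\epsilon_i^{-2} + \mu\bigl(\sum_{i=1}^k \lambda_i\theta_i\epsilon_i - B\bigr)$ and set $\partial\mathcal{L}/\partial\epsilon_i = -2\lambda_i\epsilon_i^{-3} + \mu\lambda_i\theta_i = 0$ for each $i$. This gives $\epsilon_i^{-3} = \mu\theta_i/2$, i.e. $\epsilon_i = (2/\mu)^{1/3}\theta_i^{-1/3}$, so each optimal $\epsilon_i$ is proportional to $\theta_i^{-1/3}$ with a common constant. I would then determine that constant by plugging $\epsilon_i = c\,\theta_i^{-1/3}$ back into the budget equality: $\sum_{i=1}^k \lambda_i\theta_i \cdot c\,\theta_i^{-1/3} = c\sum_{i=1}^k \lambda_i\theta_i^{2/3} = B$, hence $c = B/\sum_{j=1}^k \lambda_j\theta_j^{2/3}$. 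Substituting back yields \eqref{formula:optcontractcomepsilon}, and then $p_i^* = \theta_i\epsilon_i^* = \theta_i \cdot c\,\theta_i^{-1/3} = c\,\theta_i^{2/3}$ gives \eqref{formula:optcontractcomp}.

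Finally I would argue that this stationary point is indeed the global minimum: the objective $\sum_i \lambda_i\epsilon_i^{-2}$ is strictly convex on the positive orthant (each term is convex and decreasing), the feasible set carved out by the linear equality is convex, and the objective tends to $+\infty$ as any $\epsilon_i \to 0^+$, so the unique KKT point is the minimizer. I do not expect a serious obstacle here; the main thing to be careful about is justifying that the reduction to Problem \ref{problem:systemgoalcom1} is lossless (already handled by Lemma \ref{lemma:equality}) and that the nonnegativity/positivity of $\epsilon_i^*$ and $p_i^*$ holds automatically since $B > 0$, $\lambda_j > 0$, and $\theta_j > 0$. If one prefers to avoid multipliers entirely, an alternative is a Hölder / power-mean inequality argument bounding $\sum_i \lambda_i\epsilon_i^{-2}$ from below in terms of $\sum_i \lambda_i\theta_i\epsilon_i$ with equality exactly when $\epsilon_i \propto \theta_i^{-1/3}$; either route closes the proof.
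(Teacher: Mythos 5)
Your proposal is correct and follows essentially the same route as the paper's proof in Appendix B: substitute the tight IR constraints $p_i=\theta_i\epsilon_i$ into the budget equality, apply Lagrange multipliers to get $\epsilon_i\propto\theta_i^{-1/3}$, and recover the constant from $\sum_i\lambda_i\theta_i^{2/3}\epsilon_i$-type bookkeeping. Your added remark on strict convexity of the objective (which the paper omits) is a small but welcome strengthening, since it certifies that the stationary point is the global minimizer.
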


The proof can be found in Appendix \ref{appendix:2}. By looking into the parameters in the optimal contract provided in Theorem \ref{theorem:complete}, we have the following observation.

\begin{observation}
Recall that a smaller $\epsilon_i$ means higher PPL, Theorem \ref{theorem:complete} shows that the PPL to a type-$i$ PU decreases in $B$, and increases in $\theta_i$, which conforms to our intuition. That is, more budget can incentivize PUs to choose lower PPLs to achieve higher aggregation accuracy, and FC tends to buy less privacy from PUs with higher privacy preference to reduce payment.
\end{observation}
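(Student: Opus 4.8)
The plan is to read off both comparative-statics claims directly from the closed-form expression for $\epsilon_i^*$ in (\ref{formula:optcontractcomepsilon}), using only the monotonicity of elementary functions together with the stated sign convention that a smaller $\epsilon_i$ corresponds to a higher PPL. Under this convention, ``PPL decreases in $B$'' is equivalent to ``$\epsilon_i^*$ increases in $B$'', and ``PPL increases in $\theta_i$'' is equivalent to ``$\epsilon_i^*$ decreases in $\theta_i$''. It therefore suffices to establish these two monotonicity statements for $\epsilon_i^*$ itself, which I would do by abbreviating the strictly positive denominator as $S=\sum_{j=1}^k\lambda_j\theta_j^{2/3}>0$, so that (\ref{formula:optcontractcomepsilon}) reads $\epsilon_i^*=(B/S)\,\theta_i^{-1/3}$.

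For the dependence on $B$, I would note that neither $S$ nor $\theta_i$ involves $B$; hence $\epsilon_i^*$ is linear in $B$ with strictly positive slope $\theta_i^{-1/3}/S>0$. Consequently $\epsilon_i^*$ is strictly increasing in $B$ and the PPL is strictly decreasing in $B$, which matches the intuition that a larger budget lets the FC purchase lower PPLs for better aggregation accuracy. For the dependence on $\theta_i$, the cleanest route is the cross-sectional one: within a fixed problem instance the factor $B/S$ is a common positive constant shared by all types, so $\epsilon_i^*$ equals that constant times $\theta_i^{-1/3}$. Since $x\mapsto x^{-1/3}$ is strictly decreasing on $(0,\infty)$, it follows that $\theta_i<\theta_j$ implies $\epsilon_i^*>\epsilon_j^*$; equivalently, a higher privacy preference yields a smaller $\epsilon_i^*$ and hence a higher PPL.

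The only point requiring care — and the main, though mild, obstacle — is that $\theta_i$ also appears inside the sum $S$, so one might worry that treating $\theta_i$ as a parameter feeds back through the denominator and could reverse the conclusion. I would dispel this by an explicit comparative-statics computation. Writing $S$ as a function of $\theta_i$ and applying the quotient rule (only the $j=i$ term of $S$ contributes to $\partial S/\partial\theta_i=\tfrac{2}{3}\lambda_i\theta_i^{-1/3}$), I obtain
$$\frac{\partial\epsilon_i^*}{\partial\theta_i}=\frac{B}{S^2}\left(-\tfrac{1}{3}\theta_i^{-4/3}S-\tfrac{2}{3}\lambda_i\theta_i^{-2/3}\right)<0,$$
where the inequality holds because $B,S,\lambda_i,\theta_i>0$ force both terms inside the parenthesis to be negative. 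Thus the direct effect through $\theta_i^{-1/3}$ and the feedback through $S$ reinforce rather than oppose one another, so $\epsilon_i^*$ is strictly decreasing in $\theta_i$ under either the cross-sectional or the comparative-statics interpretation. This confirms that the FC buys less privacy (assigns a lower $\epsilon_i^*$, i.e.\ a higher PPL) to PUs with higher privacy preference, completing the argument.
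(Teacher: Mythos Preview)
Your proposal is correct and follows the same route as the paper, which simply treats the observation as an immediate consequence of the closed form~(\ref{formula:optcontractcomepsilon}) without giving any separate argument. You are more careful than the paper in that you explicitly check the feedback of $\theta_i$ through the denominator $S$, a point the paper leaves unaddressed; this extra step is sound and only strengthens the justification.
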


\subsection{Optimal Contract Design under Incomplete Information}
In the incomplete information scenario, FC does not know each PU's precise type, while the distribution of PUs' types is assumed to be known, i.e., type-$i$ have $\lambda_i$ PUs. In practice, the distribution of PUs' types can be obtained through questionnaire survey or analysis of the historical behavior of PUs \cite{zhang2015contract,duan2014cooperative}. Clearly, FC should design an optimal contract for each type of PUs to achieve best accuracy, but due to the lack of knowledge about each PU's precise type, FC can only broadcast all contracts to all PUs. However, if choosing the contract designed for other types can bring them higher utilities, some selfish PUs may pretend to be other types. To encourage all PUs to truthfully reveal their types, the optimal contracts should guarantee that choosing the contract corresponding to their own type can always achieve the highest utilities. Formally, we define this requirement as incentive compatibility constraint.
\begin{definition}[Incentive Compatibility]
A menu of contracts satisfies Incentive Compatibility (IC) constraint if the contract designed for type-$i$ PUs brings them the highest utility, i.e., %$p_i-\theta_i\epsilon_i\geq p_j-\theta_i\epsilon_j, \ \ \forall j\neq i$.
\begin{align}\label{formula:IC}
p_i-\theta_i\epsilon_i\geq p_j-\theta_i\epsilon_j, \ \ \forall j\neq i.
\end{align}
\end{definition}

Apart from the incentive compatibility constraint, the contract under incomplete information should also satisfy the individual rationality constraint in Definition \ref{definition:ir}. Thus, we can design the optimal contract under incomplete information by solving the following optimization problem:
\begin{problem}\label{problem:systemgoalincom}
\begin{align}
&\min\ \sum_{i=1}^k\frac{\lambda_i}{\epsilon_i^2}, \nonumber \\
&s.t.\ \sum_{i=1}^k\lambda_ip_i\leq B, \label{formula:budget2} \\
&\ \ \ \ p_i-\theta_i\epsilon_i\geq0, \ \ \forall i, \label{formula:IR2} \\
&\ \ \ \ p_i-\theta_i\epsilon_i\geq p_j-\theta_i\epsilon_j, \ \ \forall j\neq i.
\end{align}
\end{problem}
%Then we provide the solution to this optimization problem.

In Problem \ref{problem:systemgoalincom}, there are $k$ IR constraints and $k(k-1)$ IC constraints, which makes it difficult to solve the optimization problem. Next, we show that these constraints can be reduced to a set of fewer equivalent constraints by the following lemmas.
\begin{lemma}\label{lemma:reduceIR}
The $k$ IR constraints can be reduced to the following one constraint:
\begin{align}
p_k-\theta_k\epsilon_k=0. \label{formula:reducedIR}
\end{align}
\end{lemma}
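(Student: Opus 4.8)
The plan is to show that, at an optimal contract, the single IR constraint $p_k - \theta_k\epsilon_k = 0$ together with the IC constraints forces all other IR constraints $p_i - \theta_i\epsilon_i \geq 0$ to hold automatically, and conversely that the binding of the $k$-th IR constraint is without loss of generality. First I would invoke the IC constraints already written in \eqref{formula:IC}: for any type $i < k$, taking $j = k$ gives $p_i - \theta_i\epsilon_i \geq p_k - \theta_i\epsilon_k$. Since the types are sorted so that $\theta_i \leq \theta_k$ and we expect (from the monotonicity that IC enforces) that $\epsilon_k \geq 0$, we get $p_k - \theta_i\epsilon_k \geq p_k - \theta_k\epsilon_k = 0$. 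Chaining these two inequalities yields $p_i - \theta_i\epsilon_i \geq 0$ for every $i$, so once the highest-type IR constraint holds with equality, the remaining $k-1$ IR constraints are redundant.

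The second half is to argue that the $k$-th IR constraint must in fact be tight at optimum, so that replacing ``$\geq 0$'' by ``$= 0$'' loses no optimal solutions. I would do this by the same contradiction-style argument used for Lemma \ref{lemma:equality}: suppose at an optimal contract $p_k - \theta_k\epsilon_k > 0$. Then one can shift resources to increase some $\epsilon_i$ (improving the objective $\sum_i \lambda_i/\epsilon_i^2$) while maintaining feasibility of the budget and IC constraints — concretely, either raise $\epsilon_k$ (which slackens nothing that was binding) or reduce $p_k$ by the surplus and redistribute the freed budget to increase a payment/PPL of another type. A little care is needed to check that decreasing $p_k$ does not violate an IC constraint of the form $p_j - \theta_j\epsilon_j \geq p_k - \theta_j\epsilon_k$, but since we are lowering the right-hand side this direction is safe; the constraints $p_k - \theta_k\epsilon_k \geq p_j - \theta_k\epsilon_j$ are the ones to watch, and the surplus can be chosen small enough to preserve them.

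The main obstacle I anticipate is the bookkeeping in that redistribution step: one must exhibit an explicit feasible perturbation that strictly improves the objective whenever the $k$-th IR is slack, and verify it respects \emph{all} IC constraints, not just the downward ones. This is essentially a local-optimality / KKT-type argument, and the cleanest route is probably to first establish (as the paper seems poised to do in subsequent lemmas) that IC forces the ordering $\epsilon_1 \geq \epsilon_2 \geq \cdots \geq \epsilon_k$ and that only ``local'' (adjacent-type) IC constraints bind; with that structure in hand the perturbation argument becomes routine. If the paper has not yet proved the monotonicity, I would prove the minimal version needed here — namely $\epsilon_k \geq 0$ and that type $k$'s IR is the ``hardest'' one — directly from the budget feasibility and the fact that $\theta_k$ is the largest preference.
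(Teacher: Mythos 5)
Your proposal matches the paper's proof: both use the IC constraints with $j=k$ together with $\theta_i\leq\theta_k$ and $\epsilon_k\geq 0$ to chain $p_i-\theta_i\epsilon_i\geq p_k-\theta_i\epsilon_k\geq p_k-\theta_k\epsilon_k$, so the type-$k$ IR constraint implies all the others, and then argue by contradiction that the remaining constraint must bind at the optimum by enlarging $\epsilon_k$. If anything, you are more careful than the paper, which does not explicitly verify that increasing $\epsilon_k$ preserves the type-$k$ IC constraints $p_k-\theta_k\epsilon_k\geq p_j-\theta_k\epsilon_j$; your attention to that point is warranted but does not change the argument.
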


\begin{proof}
Notice that we have sort PUs' type in ascending order, i.e., $\theta_1\leq\theta_2\leq\cdots\leq\theta_k$, and based on IC constraint, we have
\begin{align*}
p_i-\theta_i\epsilon_i\geq p_k-\theta_i\epsilon_k\geq p_k-\theta_k\epsilon_k,\forall i\neq k.
\end{align*}

Thus, if the IR constraint of type-$k$ satisfied, i.e., $p_k-\theta_k\epsilon_k\geq0$, it will satisfied for all other types automatically. Therefore, we can keep the last IR constraint and reduce the others. Moreover, if there exists an optimal contract that satisfies $p_k-\theta_k\epsilon_k>0$, we can always find a larger $\epsilon_k$ to achieve better aggregation accuracy until $p_k-\theta_k\epsilon_k=0$, which end the proof.
\end{proof}

Lemma \ref{lemma:reduceIR} shows that only the highest type of PUs receive a zero utility, and lower types of PUs receive positive utilities that are decreasing in their types. The reason is that FC does not know each PU's type, it needs to provide incentives in terms of positive utilities to PUs to attract them revealing their truthful types. This is called \emph{information loss} compared to complete information.

\begin{lemma}[Monotonic Property]\label{lemma:monotonic}
If $\theta_1\leq\theta_2\leq\cdots\leq\theta_k$, then $\epsilon_1\geq\epsilon_2\geq\cdots\geq\epsilon_k$ holds.
\end{lemma}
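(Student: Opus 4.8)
The plan is to establish the monotonicity of $\epsilon_i$ directly from the incentive compatibility constraints, using the standard contract-theoretic trick of adding two symmetric IC inequalities. First I would pick two arbitrary types $i$ and $j$ with $\theta_i \leq \theta_j$ and write down the two IC constraints that bound them against each other:
\begin{align*}
p_i - \theta_i\epsilon_i &\geq p_j - \theta_i\epsilon_j, \\
p_j - \theta_j\epsilon_j &\geq p_i - \theta_j\epsilon_i.
\end{align*}
Adding these two inequalities cancels the payments $p_i$ and $p_j$, leaving $-\theta_i\epsilon_i - \theta_j\epsilon_j \geq -\theta_i\epsilon_j - \theta_j\epsilon_i$, which rearranges to $(\theta_j - \theta_i)(\epsilon_i - \epsilon_j) \geq 0$. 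Since $\theta_j - \theta_i \geq 0$ by our ascending ordering of types, this forces $\epsilon_i \geq \epsilon_j$. Applying this to consecutive indices $i$ and $i+1$ for $i = 1, \dots, k-1$ yields the chain $\epsilon_1 \geq \epsilon_2 \geq \cdots \geq \epsilon_k$, which is exactly the claim.

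A second, essentially equivalent route would be to argue monotonicity economically: a higher-type PU (larger $\theta_i$) suffers a larger marginal privacy loss per unit of $\epsilon$, so in any IC-feasible menu the PPL offered to higher types must be weakly smaller, otherwise a lower type would strictly prefer the higher type's contract item. I would mention this as intuition but lean on the algebraic ``add the two IC constraints'' argument for rigor, since it is short, self-contained, and needs nothing beyond Definition of the IC constraint already stated in the excerpt.

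I do not anticipate a genuine obstacle here — this is the classical single-crossing / monotonicity lemma and the pairwise IC inequalities do all the work. The only point requiring a little care is making sure the ordering convention ($\theta_1 \leq \cdots \leq \theta_k$) is used with the correct sign so that the product $(\theta_j - \theta_i)(\epsilon_i - \epsilon_j) \geq 0$ delivers $\epsilon_i \geq \epsilon_j$ rather than the reverse; I would state this explicitly. One could also note in passing that the inequalities are weak throughout, so the conclusion is the weak chain $\epsilon_1 \geq \cdots \geq \epsilon_k$ and not a strict one, consistent with the possibility of ties among the $\theta_i$.
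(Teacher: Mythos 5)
Your proof is correct and is essentially identical to the paper's: both write the two symmetric IC constraints for types $i$ and $j$, add them to cancel the payments, and conclude $(\theta_j-\theta_i)(\epsilon_i-\epsilon_j)\geq 0$, hence $\epsilon_i\geq\epsilon_j$ whenever $\theta_i\leq\theta_j$. No gaps; the argument matches the paper's step for step.
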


\begin{proof}
Based on the IC constraint, we have
\begin{align*}
p_i-\theta_i\epsilon_i\geq p_j-\theta_i\epsilon_j,\\
p_j-\theta_j\epsilon_j\geq p_i-\theta_j\epsilon_i.
\end{align*}

Adding these two inequalities result in
$\epsilon_i(\theta_j-\theta_i)\geq\epsilon_j(\theta_j-\theta_i)$. Thus, we have if $\theta_i\leq\theta_j$, then $\epsilon_i\geq\epsilon_j$ for all $i$ and $j$, which lead to the correctness of this lemma.
\end{proof}

Intuitively, Lemma \ref{lemma:monotonic} shows that a PU with higher type should be assigned lower PPL, since his unit cost is higher and the FC needs to compensate this PU more when the contribution to the aggregation accuracy are the same. Further, this Lemma can be leveraged to prove the correctness of Lemma \ref{lemma:reduceIC}.

\begin{lemma}\label{lemma:reduceIC}
The $k(k-1)$ IC constraints can be reduced to the following $k - 1$ constraints.
\begin{align}
p_i-\theta_i\epsilon_i=p_{i+1}-\theta_i\epsilon_{i+1},\forall i \leq k - 1. \label{formula:reducedIC}
\end{align}
\end{lemma}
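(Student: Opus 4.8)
The plan is to show that the full system of $k(k-1)$ IC constraints collapses to the $k-1$ \emph{local downward} constraints \eqref{formula:reducedIC}, and moreover that these hold with equality at the optimum. I would proceed in three stages: (i) show the local constraints can be taken as equalities; (ii) show the local constraints plus the Monotonic Property (Lemma~\ref{lemma:monotonic}) imply all the remaining (non-adjacent, both upward and downward) IC constraints; (iii) conclude equivalence.

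First I would argue that each adjacent constraint $p_i-\theta_i\epsilon_i \ge p_{i+1}-\theta_i\epsilon_{i+1}$ is tight. Suppose for some $i$ it is strict. Combined with Lemma~\ref{lemma:reduceIR}, the utilities $u_i = p_i - \theta_i\epsilon_i$ are then not fully pinned down from below, so one can decrease $p_i$ (and, propagating the adjacent equalities above index $i$, the payments $p_1,\dots,p_i$) without violating IR, IC, or monotonicity; this frees budget that can be reallocated to increase some $\epsilon_j$, strictly improving the objective $\sum_i \lambda_i/\epsilon_i^2$. Hence at the optimum all adjacent downward constraints bind, giving \eqref{formula:reducedIC}. (The paper uses exactly this style of ``slack $\Rightarrow$ improve'' argument in Lemmas~\ref{lemma:equality} and~\ref{lemma:reduceIR}, so I would mirror it.)

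Next, the heart of the proof: assuming \eqref{formula:reducedIC} and $\epsilon_1 \ge \epsilon_2 \ge \cdots \ge \epsilon_k$, derive the general IC constraint $p_i - \theta_i\epsilon_i \ge p_j - \theta_i\epsilon_j$ for arbitrary $i \ne j$. I would do this by a telescoping/chaining argument. For $j > i$ (downward, non-adjacent), write
\begin{align*}
p_i - \theta_i \epsilon_i &= p_{i+1} - \theta_i \epsilon_{i+1} \ge p_{i+1} - \theta_{i+1}\epsilon_{i+1} + (\theta_{i+1}-\theta_i)\epsilon_{i+2}\\
&\ge \cdots \ge p_j - \theta_i \epsilon_j,
\end{align*}
where at each step I use the next adjacent equality together with $\epsilon_{m+1} \le \epsilon_m$ and $\theta_i \le \theta_{m}$ to absorb the cross terms in the right direction. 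For $j < i$ (upward), I would instead add up the adjacent equalities from $j$ to $i$ and use monotonicity with the inequalities $\theta_m \le \theta_i$ to get $p_j - \theta_i \epsilon_j \le p_i - \theta_i \epsilon_i$. Conversely, \eqref{formula:reducedIC} is a special case ($j = i+1$) of the IC system, so no generality is lost. This establishes the equivalence claimed in the lemma.

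The main obstacle I anticipate is the chaining step for non-adjacent constraints: one must be careful that the accumulated cross terms $(\theta_{m+1}-\theta_m)\epsilon_{?}$ are bounded in the correct direction, which is precisely where $\epsilon_1 \ge \cdots \ge \epsilon_k$ (Lemma~\ref{lemma:monotonic}) is indispensable — without monotonicity the chain does not close. A secondary subtlety is justifying in step (i) that tightening $p_i$ while keeping the adjacent equalities above index $i$ intact does not break the single remaining IR constraint \eqref{formula:reducedIR}; since only the top type is at zero utility and lowering lower-indexed payments keeps their utilities nonnegative by the same ascending-order argument used in Lemma~\ref{lemma:reduceIR}, this goes through. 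Once \eqref{formula:reducedIC} is in hand, substituting it recursively (starting from $p_k = \theta_k\epsilon_k$) expresses every $p_i$ in terms of the $\epsilon$'s, reducing Problem~\ref{problem:systemgoalincom} to an unconstrained-in-$p$ optimization over $\epsilon_1 \ge \cdots \ge \epsilon_k$, which is the natural setup for the closed-form solution that follows.
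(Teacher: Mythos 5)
Your proposal is correct and follows essentially the same route as the paper's Appendix~C proof: establish tightness of the adjacent downward constraints by a ``slack $\Rightarrow$ improve'' argument, then use the Monotonic Property (Lemma~\ref{lemma:monotonic}) to chain the adjacent equalities into all non-adjacent and upward IC constraints. The only difference is organizational --- the paper splits the chaining into three separate steps (local upward $\Rightarrow$ all upward, local downward $\Rightarrow$ all downward, binding local downward $\Rightarrow$ local upward) while you telescope both directions directly from the binding adjacent equalities --- but the ingredients and the logic are the same.
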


The proof can be found in Appendix \ref{appendix:3}. Lemma \ref{lemma:reduceIC} ensures that if the contract item $(\epsilon_i, p_i)$ designed for type-$i$ PUs bring them the same utilities with the contract item $(\epsilon_{i+1}, p_{i+1})$ designed for type-$(i+1)$ PUs, all the IC constraints for type-$i$ PUs are satisfied, which means type-$i$ PUs will truthfully select the contract item designed for their corresponding type.

Based on Lemma \ref{lemma:reduceIR} and Lemma \ref{lemma:reduceIC}, we can reduce Problem \ref{problem:systemgoalincom} to the following problem:
\begin{problem}\label{problem:systemgoalincom1}
\begin{align}
&\min\ \sum_{i=1}^k\frac{\lambda_i}{\epsilon_i^2}, \nonumber \\
&s.t.\ \ \sum_{i=1}^k\lambda_ip_i=B, \label{formula:psysincom1_1} \\
& \ \ \ \ \ p_k-\theta_k\epsilon_k=0, \label{formula:psysincom1_2} \\
& \ \ \ \ \ p_i-\theta_i\epsilon_i=p_{i+1}-\theta_i\epsilon_{i+1},\forall i \leq k-1. \label{formula:psysincom1_3}
\end{align}
\end{problem}

By solving Problem \ref{problem:systemgoalincom1}, we can calculate the optimal contract as the following theorem.
\begin{theorem} \label{theorem:incomplete}
In the incomplete information scenario, the optimal contract $\{\epsilon_i^*,p_i^*\}$ is given by
\begin{align*}
\epsilon_i^*&=GH_i^{-\frac{1}{3}}\lambda_i^{\frac{1}{3}}, \\
%\label{formula:optcontractincomepsilon} \\
p_i^*& = \left\{\begin{array}{ll}G(\theta_iH_i^{-\frac{1}{3}}\lambda_i^{\frac{1}{3}}+
\sum_{j=i+1}^k\Delta\theta_jH_j^{-\frac{1}{3}}\lambda_j^{\frac{1}{3}}),
&i\neq k, \\
G\theta_kH_k^{-\frac{1}{3}}\lambda_k^{\frac{1}{3}},
&i=k, \end{array}\right.
%\label{formula:optcontractincomp}
\end{align*}
where
\begin{align}
\Delta\theta_i&=\theta_i-\theta_{i-1}, \label{formula:deltatheta}\\
H_i&=\left\{\begin{array}{ll}\lambda_1\theta_1,&i=1, \\
\lambda_i\theta_i+\Delta\theta_i\sum_{j=1}^{i-1}\lambda_j,
&i>1, \end{array}\right. \label{formula:Hi}\\
G&=\frac{B}{\sum_{j=1}^kH_j^{\frac{2}{3}}\lambda_j^{\frac{1}{3}}}.
\end{align}
\end{theorem}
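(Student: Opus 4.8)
The plan is to eliminate the payment variables $p_i$ using the equality constraints of Problem \ref{problem:systemgoalincom1}, reduce the problem to a single-constraint convex program in the PPLs $\epsilon_i$, solve it by Lagrange multipliers, and substitute back to recover $p_i^*$. First I would rewrite the IC equalities (\ref{formula:psysincom1_3}) as $p_i = p_{i+1} + \theta_i(\epsilon_i - \epsilon_{i+1})$ and unroll this telescoping recursion downward from $p_k = \theta_k\epsilon_k$ (given by (\ref{formula:psysincom1_2})); the cross terms collapse into the increments $\Delta\theta_j = \theta_j - \theta_{j-1}$, yielding
\begin{align}
p_i = \theta_i\epsilon_i + \sum_{j=i+1}^{k}\Delta\theta_j\,\epsilon_j \qquad \text{for } i \leq k-1, \label{plan:pi}
\end{align}
with $\Delta\theta_j$ as in (\ref{formula:deltatheta}).

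Next I would substitute (\ref{plan:pi}) into the budget equality (\ref{formula:psysincom1_1}) and swap the order of the resulting double sum, $\sum_{i=1}^{k}\lambda_i\sum_{j=i+1}^{k}\Delta\theta_j\epsilon_j = \sum_{j=2}^{k}\Delta\theta_j\epsilon_j\sum_{i=1}^{j-1}\lambda_i$. Collecting the coefficient of each $\epsilon_i$ reproduces exactly the quantity $H_i$ of (\ref{formula:Hi}), so the budget constraint becomes the single linear equation $\sum_{i=1}^{k}H_i\epsilon_i = B$. Hence Problem \ref{problem:systemgoalincom1} is equivalent to minimizing $\sum_{i=1}^{k}\lambda_i/\epsilon_i^2$ over $\epsilon_i > 0$ subject to $\sum_{i=1}^{k}H_i\epsilon_i = B$. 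The objective is strictly convex on the positive orthant and the constraint is affine, so a stationary point of the Lagrangian $\mathcal{L} = \sum_i \lambda_i/\epsilon_i^2 + \mu\bigl(\sum_i H_i\epsilon_i - B\bigr)$ is the unique global minimizer; solving $\partial\mathcal{L}/\partial\epsilon_i = -2\lambda_i/\epsilon_i^3 + \mu H_i = 0$ gives $\epsilon_i \propto H_i^{-1/3}\lambda_i^{1/3}$, and forcing $\sum_i H_i\epsilon_i = B$ pins the proportionality constant to $G = B/\sum_j H_j^{2/3}\lambda_j^{1/3}$. This yields $\epsilon_i^* = GH_i^{-1/3}\lambda_i^{1/3}$; plugging it into (\ref{plan:pi}) gives the $i \neq k$ formula for $p_i^*$, while $p_k^* = \theta_k\epsilon_k^* = G\theta_k H_k^{-1/3}\lambda_k^{1/3}$.

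The algebra above is routine; the step I expect to be the real obstacle is closing the loop back to the original Problem \ref{problem:systemgoalincom}. The reductions in Lemmas \ref{lemma:reduceIR} and \ref{lemma:reduceIC} are legitimate only if the contract just computed actually satisfies all the discarded inequalities — the IR constraints of types $i < k$, and the non-adjacent and ``upward'' IC constraints. The crucial check is the monotonicity $\epsilon_1^* \geq \epsilon_2^* \geq \cdots \geq \epsilon_k^*$ of Lemma \ref{lemma:monotonic}, which for the closed form reduces to verifying $\lambda_i/H_i \geq \lambda_{i+1}/H_{i+1}$ using $\theta_1 \leq \cdots \leq \theta_k$ and the definition (\ref{formula:Hi}) of $H_i$. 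Once monotonicity is in hand, a standard single-crossing argument upgrades the adjacent equalities (\ref{formula:reducedIC}) to all $k(k-1)$ IC inequalities, and the chain displayed in the proof of Lemma \ref{lemma:reduceIR} restores the remaining IR constraints; since $G, H_i, \lambda_i > 0$ the solution is interior ($\epsilon_i^* > 0$), so Lagrangian stationarity is both necessary and sufficient and no boundary cases arise.
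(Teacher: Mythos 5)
Your derivation of the closed form is essentially the paper's own proof (Appendix~D): unroll the IC equalities downward from $p_k=\theta_k\epsilon_k$ to get $p_i=\theta_i\epsilon_i+\sum_{j=i+1}^{k}\Delta\theta_j\epsilon_j$, swap the double sum so the budget equality collapses to $\sum_{i=1}^{k}H_i\epsilon_i=B$, and solve the resulting program by stationarity of the Lagrangian. That part is fine, and your observation that the objective is strictly convex on the positive orthant with an affine constraint (so the stationary point is the unique global minimizer) is a small improvement over the paper's bare KKT computation.

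The step you flag as ``the real obstacle'' is indeed the obstacle, but it does not close the way you suggest. You assert that monotonicity of $\epsilon_i^*$ ``reduces to verifying $\lambda_i/H_i\geq\lambda_{i+1}/H_{i+1}$ using $\theta_1\leq\cdots\leq\theta_k$ and the definition of $H_i$,'' implying this follows from the type ordering. It does not. Expanding, $\lambda_iH_{i+1}-\lambda_{i+1}H_i=\lambda_i\Delta\theta_{i+1}\bigl(\lambda_{i+1}+\sum_{j\leq i}\lambda_j\bigr)-\lambda_{i+1}\Delta\theta_i\sum_{j<i}\lambda_j$, which is negative whenever $\Delta\theta_{i+1}$ is small relative to $\Delta\theta_i$. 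Concretely, with $k=3$, $\lambda_1=\lambda_2=\lambda_3=1$ and $(\theta_1,\theta_2,\theta_3)=(1,\,10,\,10.01)$ one gets $H_2=19$ and $H_3=10.03$, so $\lambda_3/H_3>\lambda_2/H_2$ and hence $\epsilon_3^*>\epsilon_2^*$: the candidate contract violates Lemma~\ref{lemma:monotonic}, is therefore not incentive compatible, and cannot be the optimum of Problem~\ref{problem:systemgoalincom}. In such irregular instances the monotonicity constraint binds and the true optimum requires bunching adjacent types (ironing), which neither you nor the paper supplies. So the theorem as stated holds only under the implicit regularity condition that $\lambda_i/H_i$ is nonincreasing in $i$; your proof should either state that hypothesis explicitly or add the ironing step. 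The remainder of your closing argument --- restoring the discarded IR constraints via the chain in Lemma~\ref{lemma:reduceIR} and the non-adjacent IC constraints via single crossing \emph{once monotonicity holds} --- is correct.
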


The proof of Theorem \ref{theorem:incomplete} is given in Appendix \ref{appendix:4}. Next, we compare FC's aggregation accuracy under incomplete and complete information scenarios. In Fig. \ref{fig:ratio}, we show the ratio of FC's aggregation accuracy under incomplete information and complete information scenarios when there are three types. $\lambda_1 = 0,50,100,150,200,250$ correspond to the lines from bottom to top, respectively. In this figure, we only show $\lambda_1$ and $\lambda_2$, and $\lambda_3 = N - \lambda_1 - \lambda_2$. Other parameters are $N = 300, B = 1000, \gamma = 10, \delta = 0.9, \theta_1 = 1, \theta_2 = 2, \theta_3 = 3$. The ratio is a function of PUs' realization $\{\lambda_i\}_{i=1}^3$ in three types, which is always larger than or equal to $1$, as FC achieves best aggregation accuracy under complete information scenario. By analyzing Fig. \ref{fig:ratio}, we have the following observation.
\begin{figure}[t]
\begin{center}
\includegraphics[width=0.35\textwidth]{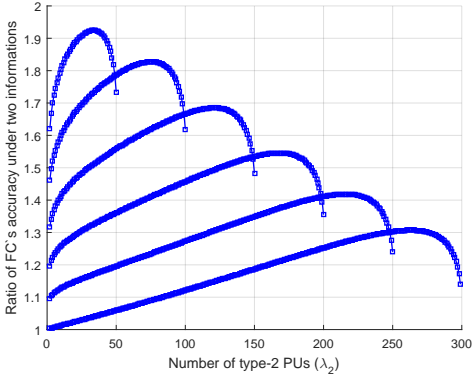}
\end{center}
\caption{The ratio of FC's aggregation accuracy under incomplete information and complete information as a function of PUs' realization in three types, i.e., $\frac{\alpha_I}{\alpha_C}$.}
\label{fig:ratio}
\end{figure}

\begin{observation}
Compared with complete information, FC achieves worse aggregation accuracy, i.e., larger $\alpha$, under incomplete information. The gap between FC's aggregation accuracy under two information scenarios is minimized when all PUs belong to the highest type, i.e., type-$3$. For fixed $\lambda_1$, the gap increases when the number of type-$3$ PUs decrease until they reach a small value.
\end{observation}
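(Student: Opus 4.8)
The plan is to promote the figure-based Observation to a closed-form statement about the accuracy ratio $\rho:=\alpha_I/\alpha_C$ and then analyze that ratio. First I would substitute the optimal contracts of Theorems~\ref{theorem:complete} and~\ref{theorem:incomplete} into the accuracy formula~(\ref{formula:relationship}), using $\sum_{i=1}^{n}1/\epsilon_i^2=\sum_{i=1}^{k}\lambda_i/\epsilon_i^2$. A short computation gives $\sum_i\lambda_i(\epsilon_i^\ast)^{-2}=S^3/B^2$ with $S=\sum_j\lambda_j\theta_j^{2/3}$ in the complete-information case, and $\sum_i\lambda_i(\epsilon_i^\ast)^{-2}=T^3/B^2$ with $T=\sum_j H_j^{2/3}\lambda_j^{1/3}$ (the $H_j$ of~(\ref{formula:Hi})) in the incomplete-information case, so the common prefactor $\tfrac{\sqrt 2\gamma}{n\sqrt{1-\delta}}$ cancels and $\rho=(T/S)^{3/2}$.

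For the first assertion ($\alpha_I\ge\alpha_C$) I would give two arguments. The structural one: every point feasible for Problem~\ref{problem:systemgoalincom} is feasible for Problem~\ref{problem:systemgoalcom} once the IC constraints are dropped, so the incomplete-information minimum of $\sum\lambda_i/\epsilon_i^2$ is no smaller, and $\alpha$ is increasing in that quantity. The quantitative one, which I will use for the rest: since types are sorted ascending, $\Delta\theta_j\ge 0$, hence $H_j=\lambda_j\theta_j+\Delta\theta_j\sum_{l<j}\lambda_l\ge\lambda_j\theta_j$, so $H_j^{2/3}\lambda_j^{1/3}\ge\lambda_j\theta_j^{2/3}$ termwise, whence $T\ge S$ and $\rho\ge 1$.

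For the second assertion I would note that $\rho=1$ exactly when $T=S$, which by the termwise inequality forces $H_j=\lambda_j\theta_j$, i.e.\ $\Delta\theta_j\sum_{l<j}\lambda_l=0$, for every $j$ with $\lambda_j>0$; when the types are distinct this happens only if the population is homogeneous. In particular, if all PUs are of the highest type (type-$3$ in the figure) the incomplete-information problem has a single contract item and no IC constraint, so it coincides with the complete-information problem and $\rho=1$ attains its global minimum, matching the bottom-most point of Fig.~\ref{fig:ratio}.

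For the monotonicity claim I would fix $\lambda_1$, set $M=n-\lambda_1$ and $\lambda_2=M-\lambda_3$, and study $\rho$ (equivalently $T$, since $S$ and the prefactor are then constants) as a function of $\lambda_3\in[0,M]$; each $H_j$ is affine in $\lambda_3$, so $T(\lambda_3)$ is an explicit sum of three $2/3$-powers of affine functions, and one computes $dT/d\lambda_3$ and locates its sign change. I expect this last step to be the main obstacle: the derivative does not factor cleanly, the phrase ``until they reach a small value'' indicates the sign flips at an interior $\lambda_3$ whose location depends on $\theta_1,\theta_2,\theta_3$ and $\lambda_1$, and proving unimodality rigorously will likely require a convexity/case analysis rather than a one-line estimate — which is presumably why the authors present this part as an observation read off the figure rather than as a theorem.
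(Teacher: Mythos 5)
Your route is genuinely different from the paper's: the paper does not prove this Observation at all, but simply reads it off Fig.~\ref{fig:ratio} and appends an intuitive ``information loss'' narrative, whereas you turn it into a closed-form statement about $\rho=\alpha_I/\alpha_C=(T/S)^{3/2}$ with $S=\sum_j\lambda_j\theta_j^{2/3}$ and $T=\sum_jH_j^{2/3}\lambda_j^{1/3}$ (your algebra here checks out against Theorems~\ref{theorem:complete} and~\ref{theorem:incomplete}). This buys real content: the feasible-set inclusion and the termwise bound $H_j\ge\lambda_j\theta_j$ each give a rigorous proof of $\alpha_I\ge\alpha_C$, and your equality analysis ($\rho=1$ iff $\Delta\theta_j\sum_{l<j}\lambda_l=0$ on the support of $\lambda$) is actually sharper than the paper's claim --- it shows the minimum $\rho=1$ is attained for \emph{any} homogeneous population, not only when all PUs are type-$3$; the type-$3$ corner is just the only homogeneous point visible on the plotted lines. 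The one piece you do not close is the unimodality-in-$\lambda_3$ claim, and you are right that this is the hard part: $T(\lambda_3)$ is a sum of terms of the form $(a_j+b_j\lambda_3)^{2/3}c_j^{1/3}$ (with $\lambda_2=M-\lambda_3$ also entering affinely), and locating the sign change of $dT/d\lambda_3$ requires a case analysis you have not carried out. Since the paper also offers no proof of that part --- its closing sentence even appears to say the ratio ``increases'' where the figure shows it turning back down --- your proposal strictly dominates the paper's treatment on the first two assertions and matches it (i.e., leaves it unproven) on the third; just be explicit that the third assertion remains a conjecture supported by the figure rather than a proved claim.
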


The ratio reaches $1$ when all PUs belong to the highest type, since in this situation, all PUs obtains zero utilities as in the complete information scenario. When the number of type-$3$ PUs decrease, the information loss increase, which lead to the increase of the gap. However, when the number of type-$3$ PUs reach a small value, the effect of information loss decreases compared to the complete information, so that the ratio increase.

\subsection{Discussions on Practical Implementation}
By solving the above optimization problem, we could provide a menu of optimal contracts to incentivize all types of PUs' participation in crowdsensing. However, PUs' action, if cannot be monitored by FC, may deviate from the contract in practice, e.g., a selfish PU may add noise with higher PPL than which signed in the contract to achieve higher utility. To ensure that all PUs generate noise strictly with the PPLs signed in the contract, we need a trusted app installed in the mobile device \cite{zhuo2015privacy}. Once the contract is signed, the noise level would be controlled by the trusted app, whose PPL can be monitored by FC.

%Firstly, the FC publish its interested sensory data to the PUs through the trusted app. Secondly, the PUs choose their privacy preference privately for the sensory data. Thirdly, the FC devise a menu of optimal contract that maximize its aggregation accuracy, and published them to all users. Lastly, the trusted app choose the contract that can maximize the users utility, and add noise to users' sensory data obey the PPL signed in the contract before uploading to the FC.

%-----------------------------------------------------------------------------------------

%-----------------------------------------------------------------------------------------
\section{Generalization to the Continuous Case}\label{section:continuum}
In this section, we will analyze the optimal contract design when PUs' types are continuous.

We assume that PUs' types $\theta$ are in the interval $[\underline{\theta},\overline{\theta}]$, and the probability density function of $\theta$ is $h(\theta)$. Similar to the analysis in the discrete case, FC can design the optimal contracts by solving the following optimization problem:
\begin{problem}\label{problem:continuous}
\begin{align}
&\min\ \int_{\underline{\theta}}^{\overline{\theta}}\frac{h(\theta)}{\epsilon^2(\theta)}d\theta, \nonumber \\
&s.t.\ \ \int_{\underline{\theta}}^{\overline{\theta}}p(\theta)h(\theta)d\theta\leq B, \label{formula:continuous_budget} \\
& \ \ \ \ \ p(\theta)-\theta\epsilon(\theta)\geq0, \label{formula:continuous_IR} \\
& \ \ \ \ \ p(\theta)-\theta\epsilon(\theta)\geq p(\hat{\theta})-\theta\epsilon(\hat{\theta}), \forall \hat{\theta}\neq\theta. \label{formula:continuous_IC}
\end{align}
\end{problem}
where (\ref{formula:continuous_budget}) is the budget constraint, (\ref{formula:continuous_IR}) is the IR constraints and (\ref{formula:continuous_IC}) is the IC constraints.

Notice that the IR and IC constraints in (\ref{formula:continuous_IR}) and (\ref{formula:continuous_IC}) are infinite since $\theta$ is a continuous value. The infinite constraints makes it difficult to solve the optimization problem. Similarly, we first reduce the IR and IC constraints by the following two lemmas.
\begin{lemma}
The infinite IR constraints can be reduced to the following one constraint,
\begin{align}\label{formula:continuous_IR1}
p(\overline{\theta})-\overline{\theta}\epsilon(\overline{\theta})=0.
\end{align}
\end{lemma}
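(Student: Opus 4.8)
The plan is to mirror the proof of Lemma~\ref{lemma:reduceIR} from the discrete case, splitting the argument into two parts: first, that a single IR constraint at the top type $\overline{\theta}$ together with the IC constraints (\ref{formula:continuous_IC}) already forces IR to hold for every $\theta\in[\underline{\theta},\overline{\theta}]$, so all the other IR constraints in (\ref{formula:continuous_IR}) are redundant; and second, that this surviving IR constraint must be \emph{binding} at any optimum of Problem~\ref{problem:continuous}, so it can be replaced by the equality (\ref{formula:continuous_IR1}).

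For the first part I would fix an arbitrary $\theta<\overline{\theta}$ and instantiate the IC constraint (\ref{formula:continuous_IC}) with the deviation $\hat{\theta}=\overline{\theta}$, which gives $p(\theta)-\theta\epsilon(\theta)\geq p(\overline{\theta})-\theta\epsilon(\overline{\theta})$. I then rewrite the right-hand side as $\big(p(\overline{\theta})-\overline{\theta}\epsilon(\overline{\theta})\big)+(\overline{\theta}-\theta)\epsilon(\overline{\theta})$. Since a PPL is nonnegative, $\epsilon(\overline{\theta})\geq0$, and $\overline{\theta}-\theta\geq0$, the second summand is nonnegative; hence if the IR constraint holds at $\overline{\theta}$, i.e. $p(\overline{\theta})-\overline{\theta}\epsilon(\overline{\theta})\geq0$, we obtain $p(\theta)-\theta\epsilon(\theta)\geq0$ for all $\theta$. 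Note this also yields the stronger bound $p(\theta)-\theta\epsilon(\theta)\geq p(\overline{\theta})-\overline{\theta}\epsilon(\overline{\theta})$, which I will reuse in the second part.

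For the second part I would argue by contradiction. Suppose an optimal pair $(\epsilon(\cdot),p(\cdot))$ has $p(\overline{\theta})-\overline{\theta}\epsilon(\overline{\theta})=c>0$. By the bound just derived every type then enjoys utility at least $c$, so subtracting $c$ from $p(\theta)$ for all $\theta$ keeps all IR constraints satisfied, leaves every IC constraint unchanged (both sides drop by $c$), and strictly slackens the budget (\ref{formula:continuous_budget}) by $c\int_{\underline{\theta}}^{\overline{\theta}}h(\theta)\,d\theta=c>0$. Reinvesting this freed budget to raise PPLs strictly decreases the objective $\int_{\underline{\theta}}^{\overline{\theta}}h(\theta)/\epsilon^2(\theta)\,d\theta$, contradicting optimality; hence $c=0$ and (\ref{formula:continuous_IR1}) holds.

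The algebraic decomposition in the first part is routine. The delicate point, and the main obstacle, is carrying out the budget reinvestment in the second part while keeping the \emph{infinitely many} IC constraints intact — naively enlarging $\epsilon(\overline{\theta})$ alone can break IC both upward and downward. The cleanest way I would handle this is to reinvest along the monotone direction consistent with Lemma~\ref{lemma:monotonic} (equivalently, to perturb within the local/envelope form of the IC constraints established in the next lemma), so that the adjusted PPL profile remains implementable; once that is secured the contradiction is immediate.
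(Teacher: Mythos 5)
Your proof is correct, and its first half is exactly the paper's argument: instantiate the IC constraint (\ref{formula:continuous_IC}) at $\hat{\theta}=\overline{\theta}$ and use $\theta\leq\overline{\theta}$ together with $\epsilon(\overline{\theta})\geq0$ to get $p(\theta)-\theta\epsilon(\theta)\geq p(\overline{\theta})-\overline{\theta}\epsilon(\overline{\theta})$, so only the top type's IR constraint survives. For the binding step the paper uses a slightly different and more direct perturbation: it keeps $p(\overline{\theta})$ fixed and enlarges $\epsilon(\overline{\theta})$ until the top type's utility reaches zero, which improves the objective without touching the budget at all; your version instead shifts every payment down by $c$, frees budget $c$, and reinvests it. Both are informal exchange arguments of the same spirit. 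The IC-preservation subtlety you flag at the end is genuine, but note that it afflicts the paper's own one-line argument just as much (raising $\epsilon(\overline{\theta})$ lowers the left-hand side of type-$\overline{\theta}$'s own IC constraints, which neither proof checks), so your proposal is at least as complete as the paper's; your suggestion to perturb within the envelope/monotonicity representation of IC from the following lemma is the right way to make either version rigorous.
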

%The proof is similar to the discrete case.

\begin{proof}
We can derive the following inequalities based on the IC constraints,
\begin{align*}
p(\theta)-\theta\epsilon(\theta)&\geq p(\overline{\theta})-\theta\epsilon(\overline{\theta}) \\
&\geq p(\overline{\theta})-\overline{\theta}\epsilon(\overline{\theta}), \forall \theta\neq\overline{\theta}.
\end{align*}

Thus, the IR constraint satisfied for type-$\overline{\theta}$ PUs implies that it satisfied for all $\theta\in[\underline{\theta},\overline{\theta}]$. Then, we can reduce IR constraint to $p(\overline{\theta})-\overline{\theta}\epsilon(\overline{\theta})\geq0$. Moreover, if there exists an optimal contract $(\epsilon(\overline{\theta}),p(\overline{\theta}))$ such that $p(\overline{\theta})-\overline{\theta}\epsilon(\overline{\theta})>0$, we can always find a larger $\epsilon(\overline{\theta})$ to achieve better aggregation until $p(\overline{\theta})-\overline{\theta}\epsilon(\overline{\theta})=0$, which lead to correctness of this lemma.
\end{proof}

\begin{lemma}
The infinite IC constraints can be reduced to the following two constraints,
\begin{align}
\frac{d\epsilon(\theta)}{d\theta}&\leq0, \label{formula:continuous_IC1} \\
\frac{dp(\theta)}{d\theta}&-\theta\frac{d\epsilon(\theta)}{d\theta}=0. \label{formula:continuous_IC2}
\end{align}
\end{lemma}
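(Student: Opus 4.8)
The plan is to mirror the classical "first-order approach" to continuous-type screening problems. First I would show that the IC constraints force a necessary local condition. Fix $\theta$ and treat $\hat\theta$ as the reported type; by (\ref{formula:continuous_IC}) the function $\hat\theta\mapsto p(\hat\theta)-\theta\epsilon(\hat\theta)$ attains its maximum at $\hat\theta=\theta$. Assuming $\epsilon(\cdot)$ and $p(\cdot)$ are differentiable, the first-order stationarity condition at the maximizer gives $p'(\theta)-\theta\epsilon'(\theta)=0$, which is exactly (\ref{formula:continuous_IC2}); the second-order condition gives $p''(\theta)-\theta\epsilon''(\theta)\le 0$. Differentiating the identity $p'(\theta)=\theta\epsilon'(\theta)$ yields $p''(\theta)=\epsilon'(\theta)+\theta\epsilon''(\theta)$, so the second-order condition collapses to $\epsilon'(\theta)\le 0$, i.e.\ (\ref{formula:continuous_IC1}). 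This is the discrete Monotonic Property (Lemma~\ref{lemma:monotonic}) in its infinitesimal form, so the same economic intuition applies: higher privacy preference $\Rightarrow$ lower PPL.

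Second, I would verify the converse — that (\ref{formula:continuous_IC1}) and (\ref{formula:continuous_IC2}) together are \emph{sufficient} for all the (global) IC constraints in (\ref{formula:continuous_IC}), which is the part that actually needs care. Define $g(\hat\theta)=p(\hat\theta)-\theta\epsilon(\hat\theta)$ for a fixed true type $\theta$. We want $g(\hat\theta)\le g(\theta)$ for all $\hat\theta$. Using (\ref{formula:continuous_IC2}) in the form $p'(\hat\theta)=\hat\theta\,\epsilon'(\hat\theta)$, we get $g'(\hat\theta)=p'(\hat\theta)-\theta\epsilon'(\hat\theta)=(\hat\theta-\theta)\epsilon'(\hat\theta)$. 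By (\ref{formula:continuous_IC1}), $\epsilon'(\hat\theta)\le 0$, so $g'(\hat\theta)\ge 0$ for $\hat\theta<\theta$ and $g'(\hat\theta)\le 0$ for $\hat\theta>\theta$; hence $g$ is nondecreasing on $[\underline\theta,\theta]$ and nonincreasing on $[\theta,\overline\theta]$, so $\theta$ is indeed a global maximizer of $g$. Integrating this: $g(\theta)-g(\hat\theta)=\int_{\hat\theta}^{\theta}(t-\theta)\epsilon'(t)\,dt\ge 0$, which is precisely the IC inequality $p(\theta)-\theta\epsilon(\theta)\ge p(\hat\theta)-\theta\epsilon(\hat\theta)$. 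This establishes equivalence between the two-condition system and the full continuum of IC constraints.

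The main obstacle is the differentiability/regularity issue: the argument above tacitly assumes $\epsilon(\cdot)$ and $p(\cdot)$ are smooth enough to differentiate and to apply the fundamental theorem of calculus. A fully rigorous treatment would first note that IC alone forces $p(\cdot)-\theta\epsilon(\cdot)$ to be a maximum over a one-parameter family, which makes the ``information rent'' $U(\theta)=\max_{\hat\theta}\,[p(\hat\theta)-\theta\epsilon(\hat\theta)]$ convex in $\theta$ (as a pointwise supremum of affine functions of $\theta$), hence differentiable a.e.\ with $U'(\theta)=-\epsilon(\theta)$ by the envelope theorem, and monotone $\epsilon$ then follows from convexity of $U$. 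Since the paper's preceding analysis and the surrounding continuous-case development already proceed under the implicit assumption that the optimal contract functions are differentiable (the objective and constraints in Problem~\ref{problem:continuous} are written in differential/integral form), I would adopt the same smoothness assumption and present the concise stationarity-plus-monotonicity argument, relegating the envelope-theorem justification to a remark. With (\ref{formula:continuous_IR1}), (\ref{formula:continuous_IC1}) and (\ref{formula:continuous_IC2}) in hand, Problem~\ref{problem:continuous} reduces to a calculus-of-variations problem that can then be attacked by the optimal-control approach announced in the introduction.
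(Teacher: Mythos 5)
Your proposal is correct and follows essentially the same route as the paper's proof: necessity via the first- and second-order conditions at the reported-type optimum (with the second-order condition collapsing to $\epsilon'(\theta)\leq 0$ after differentiating the first-order identity), and global sufficiency by integrating the first-order condition and invoking monotonicity of $\epsilon$. Your sign-of-$g'$ phrasing of the sufficiency step and the remark on differentiability are minor refinements of the paper's integration-by-parts argument, not a different method.
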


\begin{proof}
Based on (\ref{formula:continuous_IC}), we can derive the following two local conditions for type-$\theta$ PUs,
\begin{align}
\frac{dp(\hat{\theta})}{d\hat{\theta}}\Big|_{\hat{\theta}=\theta}
-\theta\frac{d\epsilon(\hat{\theta})}{d\hat{\theta}}\Big|_{\hat{\theta}=\theta}=0, \label{formula:firstorder}\\
\frac{d^2p(\hat{\theta})}{d\hat{\theta}^2}\Big|_{\hat{\theta}=\theta}
-\theta\frac{d^2\epsilon(\hat{\theta})}{d\hat{\theta}^2}\Big|_{\hat{\theta}=\theta}\leq0. \label{formula:secondorder}
\end{align}

Since (\ref{formula:firstorder})(\ref{formula:secondorder}) hold for all $\theta\in[\underline{\theta},\overline{\theta}]$, we have
\begin{align}
\frac{dp(\theta)}{d\theta}-\theta\frac{d\epsilon(\theta)}{d\theta}=0, \label{formula:firstorder1}\\
\frac{d^2p(\theta)}{d\theta^2}-\theta\frac{d^2\epsilon(\theta)}{d\theta^2}\leq0. \label{formula:secondorder1}
\end{align}

By differentiating (\ref{formula:firstorder1}), we can simplify (\ref{formula:secondorder1}) as
\begin{align}\label{formula:monotonic}
\frac{d\epsilon(\theta)}{d\theta}\leq0.
\end{align}

Then, we prove that (\ref{formula:firstorder1}) and (\ref{formula:monotonic}) hold globally. By integrating (\ref{formula:firstorder1}) from $\hat{\theta}$ to $\theta$, we have
\begin{align} \label{formula:global}
p(\theta)-p(\hat{\theta})=\theta\epsilon(\theta)-\theta\epsilon(\hat{\theta})
-\int_{\hat{\theta}}^\theta\epsilon(u)du.
\end{align}

Rearrange (\ref{formula:global}), we have
\begin{align*}
p(\theta)-\theta\epsilon(\theta)=p(\hat{\theta})-\hat{\theta}\epsilon(\hat{\theta})+
(\hat{\theta}-\theta)\epsilon(\hat{\theta})-\int_{\hat{\theta}}^\theta\epsilon(u)du.
\end{align*}

Since $\epsilon(\theta)$ is non-increasing, we have $(\hat{\theta}-\theta)\epsilon(\hat{\theta})-\int_{\hat{\theta}}^\theta\epsilon(u)du\geq0$. Thus, we can conclude that $p(\theta)-\theta\epsilon(\theta)\geq p(\hat{\theta})-\hat{\theta}\epsilon(\hat{\theta})$ for all $\hat{\theta}\neq\theta$, which indicate that (\ref{formula:firstorder1}) and (\ref{formula:monotonic}) hold globally.
\end{proof}

Similar to the analysis in the discrete case, the budget constraint (\ref{formula:continuous_budget}) can take the equal sign, i.e.,
\begin{align}\label{formula:continuous_budget1}
\int_{\underline{\theta}}^{\overline{\theta}}p(\theta)h(\theta)d\theta=B.
\end{align}

Then, we can transform Problem \ref{problem:continuous} to the following problem:
\begin{problem}\label{problem:continuous1}
\begin{align*}
&\min\ \int_{\underline{\theta}}^{\overline{\theta}}\frac{h(\theta)}{\epsilon^2(\theta)}d\theta, \nonumber \\
&s.t.\ \ (\ref{formula:continuous_budget1})(\ref{formula:continuous_IR1})(\ref{formula:continuous_IC1})(\ref{formula:continuous_IC2}).
\end{align*}
\end{problem}

Notice that Problem \ref{problem:continuous1} is a functional extreme value problem, we can utilize the optimal control method to solve this problem.

Let $u(\theta)=\epsilon(\theta)$ be the control variable, and let $x_1(\theta)=p(\theta)-\theta\epsilon(\theta)$ be the state variable. Then, we have
\begin{align*}
\dot{x}_1(\theta)&=\dot{p}(\theta)-\epsilon(\theta)-\theta\dot{\epsilon}(\theta) \\
&=-\epsilon(\theta)=-u(\theta),
\end{align*}
where the second equality is due to (\ref{formula:continuous_IC2}).

To deal with the budget constraint (\ref{formula:continuous_budget}), we can define a new state variable 
\begin{align}
\dot{x}_2(\theta)=p(\theta)h(\theta)=[x_1(\theta)+\theta u(\theta)]h(\theta)
\end{align}

Based on (\ref{formula:continuous_budget}), we can derive the following transversality condition,
\begin{align}\label{formula:transversality}
x_2(\overline{\theta})-x_2(\underline{\theta})=B.
\end{align}

Thus, the Hamiltonian of the optimal control problem is
\begin{align*}
&H[x(\theta),u(\theta),\lambda(\theta),\theta] \\
=&\frac{h(\theta)}{u^2(\theta)}-\lambda_1(\theta)u(\theta)+\lambda_2(\theta)[x_1(\theta)+\theta u(\theta)]h(\theta),
\end{align*}
where $\lambda_1(\theta)$ and $\lambda_2(\theta)$ is the co-state variables.

According to the Euler-Lagrange equation for optimal control problem, we have the following conditions,
\begin{align*}
\frac{\partial{H}}{\partial{u}}&=\frac{-2h(\theta)}{u^3(\theta)}-\lambda_1+\lambda_2\theta h(\theta)=0, \\
\dot{\lambda}_1(\theta)&=-\frac{\partial{H}}{\partial{x_1}}=-\lambda_2h(\theta), \\
\dot{\lambda}_2(\theta)&=-\frac{\partial{H}}{\partial{x_2}}=0.
\end{align*}

Thus, we can calculate the co-state variables as,
\begin{align*}
\lambda_2(\theta)&=c_1, \\
\lambda_1(\theta)&=-c_1H(\theta)+c_2,
\end{align*}
where $c_1$ and $c_2$ are constants which can be calculated by the transversality conditions (\ref{formula:transversality})(\ref{formula:continuous_IR1}).

Then, we the optimal contract $[\epsilon^*(\theta),p^*(\theta)]$ is given by,
\begin{align*}
\epsilon^*(\theta)
&=u^*(\theta) \\
&=\sqrt[3]{\frac{2h(\theta)}{c_1\theta h(\theta)-c_1H(\theta)-c_2}}, \\
p^*(\theta)
&=x_1(\theta)+\theta\epsilon(\theta) \\
&=\theta\epsilon^*(\theta)-\int_{\underline{\theta}}^{\theta}\epsilon^*(\tau)d\tau.
\end{align*}

%-----------------------------------------------------------------------------------------

%-----------------------------------------------------------------------------------------
\section{Simulation Studies}\label{section:simulation}
In this section, we first validate the feasibility of the proposed contracts, and then analyze the impact of different system parameters on the aggregation accuracy.
\begin{table}[!htp]
	\caption{Simulation settings}
	\centering
	\begin{tabular}[c]{|c|c|c|}
		\hline \label{table:simulationsettings}
		Parameter            			& \multicolumn{2}{c|}{Value}   \\ \hline\hline
		Number of PUs ($N$)				& \multicolumn{2}{c|}{$200$}     \\ \hline
		Privacy preference ($\theta$) 	& \multicolumn{2}{c|}{$[5,15]$}  \\ \hline
		Number of PUs' types 			& Feasibility & 20             \\ \cline{2-3} 
		($k$)                			& Performance & $[5,20]$         \\ \hline
		Budget constraint    			& Feasibility & $1000$           \\ \cline{2-3} 
		($B$)                			& Performance & $[500,1000]$     \\ \hline
	\end{tabular}
\end{table}

% \subsection{Simulation Settings}
% For comparison purpose, we first introduce an homogeneous incentive mechanism under information asymmetry. In the homogeneous incentive mechanism, the FC center provide the same incentive to all PUs, regardless of their privacy preference. To incentivize PUs' participation, the homogeneous mechanism should only satisfy that $p \geq \theta_{max}\epsilon$.

The simulation settings are shown in Table \ref{table:simulationsettings}. We assume there are $200$ PUs and their privacy preferences are from $5$ to $10$ in both simulations. For simplicity, we consider a uniform distribution of PUs' privacy preference. To illustrate the feasibility of the proposed contract, we set the number of PUs' types $k$ and the budget constraint $B$ to $20$ and $1000$ respectively. To evaluate the impact of parameter $k$ and $B$ to the aggregation accuracy, we set the value ranges to $[5,20]$ and $[500,1000]$ respectively. 
% while the other parameter remains the same to the contract feasibility part.

\subsection{Contract Feasibility}
In this subsection, we illustrate that the proposed optimal contracts satisfy both the \emph{monotonic} property and \emph{incentive compatibility} property.

\begin{figure}[ht]
\begin{center}
\includegraphics[width=0.35\textwidth]{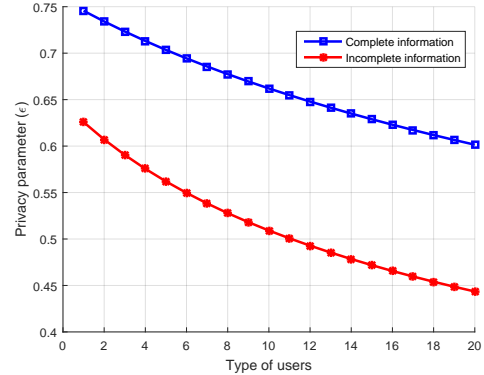}
\end{center}
\caption{Contract monotonicity.}\label{fig:monotonic}
\end{figure}
%\vspace{-20pt}

\begin{figure}[ht]
\begin{center}
\includegraphics[width=0.35\textwidth]{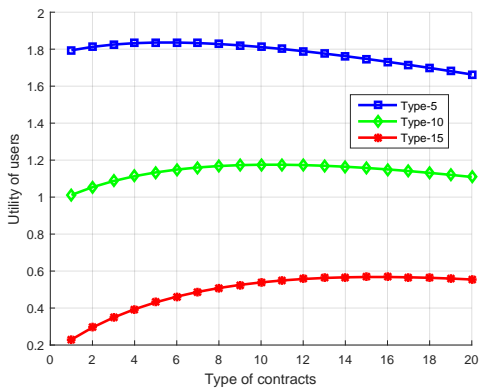}
\end{center}
\caption{Contract incentive compatibility.}\label{fig:IC}
\end{figure}
%\vspace{-20pt}

\begin{figure}[ht]
\begin{center}
\includegraphics[width=0.35\textwidth]{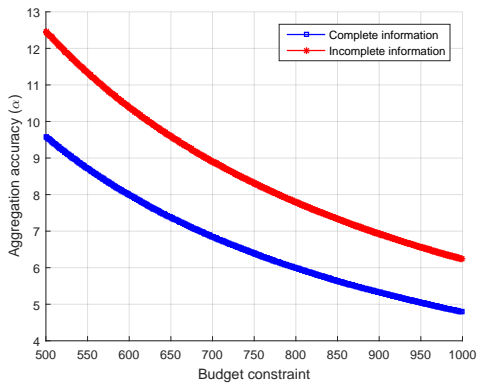}
\end{center}
\caption{Aggregation accuracy Vs. budget constraint.} \label{fig:discrete_performance_B}
\end{figure}
%\vspace{-20pt}

\begin{figure}[ht]
\begin{center}
\includegraphics[width=0.35\textwidth]{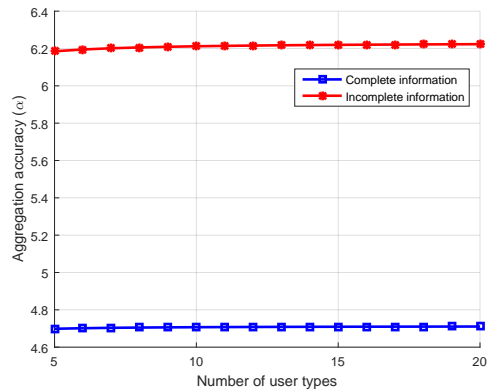}
\end{center}
\caption{Aggregation accuracy Vs. number of user types.} \label{fig:discrete_performance_k}
\end{figure}
%\vspace{-20pt}

Fig. \ref{fig:monotonic} shows that $\epsilon$ decreases when PUs' types increase. Since a smaller $\epsilon$ means higher PPL, Fig. \ref{fig:monotonic} indicates that PUs with higher type tend to choose higher PPL, which validate the \emph{monotonic} property discussed in Lemma \ref{lemma:monotonic}. Besides, the result is accord with our intuition that the FC choose to buy less privacy from the PUs with higher privacy preference to reduce the payment. In another hand, we find that under the same budget constraint, PUs' PPLs under complete information scenario are lower than which in incomplete information scenario. The reason is that in complete information scenario, the FC knows each PU's precise type, so that the contract designed to all types of PUs can take zero utilities, as Lemma \ref{lemma:equality} shows. However, in the incomplete information scenario, PUs' precise types is unknown to the system. Thus, only the highest type contract can take zero utility, whereas other types of PUs' utilities should remain strictly positive, since otherwise, the PUs with lower type will pretend to be higher type to achieve higher utility.

In Fig. \ref{fig:IC}, we show the utility function of type-$5$, type-$10$ and type-$15$ PUs when they choose all types of contracts. Notice that the utility function is concave for all types of PUs, and each type of PUs achieve their optimal utilities when they choose the corresponding contract, e.g., type-$5$ PUs achieve their optimal utilities when they choose type-$5$ contract, which validate the \emph{incentive compatibility} property. Additionally, we observe that the PUs with lower type can achieve higher utility when they choose the same contract. The reason is that the lower type PUs have lower privacy preference $\theta_i$, according to PUs' utility definition $u_j=p_j-\theta_i\epsilon_j,\forall j$, a smaller $\theta_i$ result in higher utility.

\subsection{System Performance}
In this subsection, we show the impact of different system parameters on the aggregation accuracy.

Fig. \ref{fig:discrete_performance_B} shows the impact of the amount of budget on the aggregation accuracy when other parameters are fixed. We observe that $\alpha$ decreases when the amount of budget increases. Since a smaller $\alpha$ means a better aggregation accuracy, \ref{fig:discrete_performance_B} indicates that larger amount of budget lead to better aggregation accuracy. The reason is obvious, when the FC possesses more budget, it can provide more incentive to drive PUs to choose lower PPL, which lead to better aggregation accuracy.

In Fig. \ref{fig:discrete_performance_k}, we evaluate the impact of number of PUs' types on the aggregation accuracy when other parameters are fixed. Fig. \ref{fig:discrete_performance_k} shows that, the aggregation accuracy decreases with the number of PUs' types. Recall the reduced IR constraint $p_k - \theta_k\epsilon_k = 0$ and IC constraints $p_i - \theta_i\epsilon_i = p_{i+1} - \theta_i\epsilon_{i+1}$, we can set the utility of higher type PUs more close to $0$, which means less additional payments. Thus, the increase of PUs' types lead to more additional payment which decrease the aggregation accuracy under budget constraint.
%-----------------------------------------------------------------------------------------

%-----------------------------------------------------------------------------------------
\section{Conclusion}\label{section:conclusion}
In this paper, we designed an incentive mechanism REAP to compensate PUs' privacy loss. Unlike previous mechanisms, we did not require FC to be trustworthy and allow PUs to add well calibrated noise to their sensing data before reporting them to FC. Then, in order to achieve better aggregation accuracy under a budget constraint, we devised a contract-based incentive mechanism to induce PUs to lower down their PPL. Optimal contracts with closed form were derived in both complete and incomplete information scenarios. Our results were generalized to the continuous case. Extensive simulations were conducted to validate the feasibility of our proposed incentive mechanism.
%-----------------------------------------------------------------------------------------

%-----------------------------------------------------------------------------------------
%\input{acknowledgment}
%-----------------------------------------------------------------------------------------

%\bibliographystyle{ACM-Reference-Format}
\bibliographystyle{IEEEtran}
\bibliography{cited}

%-----------------------------------------------------------------------------------------
\appendix
\section{Proof of Lemma 3.2}\label{appendix:1} %\ref{lemma:relationship}
\begin{proof}
The aggregation error of the randomized sensing data can be written as
\begin{align*}
\hat{s}-s=\frac{1}{n}\sum_{i=1}^N(d_i+\eta_i) - \frac{1}{n}\sum_{i=1}^Nd_i
=\frac{1}{n}\sum_{i=1}^N\eta_i.
\end{align*}

% In order to calculate $Pr[|\hat{s}-s|\leq\alpha]$, we need to know the probability distribution of $\hat{s}-s$. Notice that the summation of Laplacian distribution still obey Laplacian distribution, we have $(\hat{s}-s)\sim Lap(0,B_s)$, where $B_s$ is the distribution parameter of $(\hat{s}-s)$. 
Recall that the variance of Laplacian random variable $\eta_i\sim Lap(0,b_i)$ is $2b_i^2$, i.e., $D(\eta_i)=2b_i^2$, we can derive that
\begin{align*}
D(\frac{1}{n}\sum_{i=1}^N\eta_i)=\frac{2}{n^2}\sum_{i=1}^nb_i^2.
\end{align*}

% Therefore, $B_s$ is given by $\frac{1}{n}\sqrt{\sum_{i=1}^nb_i^2}$. Then
% \begin{align*}
% Pr[|\hat{s}-s|\leq\alpha]&=\int_{-\alpha}^\alpha\frac{1}{2B_s}e^{-\frac{|x|}{B_s}}dx \\
% &=1-e^{-\frac{\alpha}{B_s}}
% \end{align*}
Therefore, from the \emph{Chebyshev's inequality}, we have
\begin{align*}
P[|s-\hat{s}| \geq \alpha] \leq \frac{2}{\alpha^2n^2}\sum_{i=1}^{n}b_i^2,
\end{align*}
which indicates that the aggregated randomized sensing data satisfies $(\alpha, \frac{2}{\alpha^2n^2}\sum_{i=1}^{n}b_i^2)$-accuracy.

Thus, for a given confidence level $\delta \leq 1$, we have
\begin{align*}
\alpha = \frac{\sqrt{2}\gamma}{n\sqrt{1-\delta}}\sqrt{\sum_{i=1}^nb_i^2}
\end{align*}

Substituting $b_i=\frac{\gamma_i}{\epsilon_i}$ into the above formula, and set $\gamma_i=\gamma$ for all $i$, we derive
\begin{align*}
\alpha = \frac{\sqrt{2}\gamma}{n\sqrt{1-\delta}}\sqrt{\sum_{i=1}^n\frac{1}{\epsilon_i^2}}.
\end{align*}
\end{proof}

\section{Proof of Theorem 4.2}\label{appendix:2} %\ref{theorem:complete}
\begin{proof}
Substituting (\ref{formula:IR1}) to (\ref{formula:budget1}), we have
\begin{align}\label{formula:budget_epsilon}
\sum_{i=1}^k\lambda_i\theta_i\epsilon_i=B
\end{align}

The Lagrangian of Problem \ref{problem:systemgoalcom1} can be written as
\begin{align*}
L(\epsilon_i,\alpha)=\sum_{i=1}^k[\frac{\lambda_i}{\epsilon_i^2}+\alpha\lambda_i\theta_i\epsilon_i]-\alpha B,
\end{align*}
where $\alpha$ is the Lagrangian multiplier.

Based on the KKT condition, we have
\begin{align*}
\frac{\partial L}{\partial\epsilon_i}=\frac{-2\lambda_i}{\epsilon_i^3}+\alpha\lambda_i\theta_i=0,\ \ \ \forall i.
\end{align*}

Solving the above equation obtain $\epsilon_i=\sqrt[3]{\frac{2}{\alpha}}\theta_i^{-\frac{1}{3}}$. Substituting this formula to (\ref{formula:budget_epsilon}), we have
\begin{align*}
\sqrt[3]{\frac{2}{\alpha}}=\frac{B}{\sum_{i=1}^k\lambda_i\theta_i^{\frac{2}{3}}}.
\end{align*}

Therefore, $\epsilon_i^*$ is given by 
\begin{align}
\epsilon_i^*&=\frac{B}{\sum_{j=1}^k\lambda_j\theta_j^{\frac{2}{3}}}\theta_i^{-\frac{1}{3}}, \label{formula:optcontractcomepsilon}
\end{align} 

Substituting (\ref{formula:optcontractcomepsilon}) to $p_i^*-\theta_i\epsilon_i^*=0$, $p_i^*$ can be calculated as
\begin{align}
p_i^*&=\frac{B}{\sum_{j=1}^k\lambda_j\theta_j^{\frac{2}{3}}}\theta_i^{\frac{2}{3}}. \label{formula:optcontractcomp}
\end{align}
\end{proof}

\section{Proof of Lemma 4.5}\label{appendix:3}
\begin{proof}
We will conduct the proof of this lemma by three steps.

Firstly, we prove that if $p_i-\theta_i\epsilon_i\geq p_{i-1}-\theta_i\epsilon_{i-1}$ satisfies, then $p_i-\theta_i\epsilon_i\geq p_j-\theta_i\epsilon_j$ hold for all $j\in\{i-1,i-2,\cdots,1\}$.

Based on the IC constraint, we have
\begin{align}
p_i-\theta_i\epsilon_i&\geq p_{i-1}-\theta_i\epsilon_{i-1}, \\
p_{i-1}-\theta_{i-1}\epsilon_{i-1}&\geq p_{i-2}-\theta_{i-1}\epsilon_{i-2}. \label{formula:down}
\end{align}

Formula (\ref{formula:down}) can be transformed to the following form
\begin{align*}
\theta_{i-1}(\epsilon_{i-2}-\epsilon_{i-1})\geq p_{i-2}-p_{i-1}.
\end{align*}

Recall the monotonic property in Lemma \ref{lemma:monotonic}, we know that $\theta_{i-1}\leq\theta_i$ and $\epsilon_{i-2}\geq\epsilon_{i-1}$. Thus, we have $\theta_i(\epsilon_{i-2}-\epsilon_{i-1})\geq p_{i-2}-p_{i-1}$ or $p_{i-1}-\theta_i\epsilon_{i-1}\geq p_{i-2}-\theta_i\epsilon_{i-2}$. Following the same step, we have
\begin{align*}
p_i-\theta_i\epsilon_i&\geq p_{i-1}-\theta_i\epsilon_{i-1}
\geq\cdots
\geq p_1-\theta_i\epsilon_1.
\end{align*}

These inequalities lead to the correctness of this step.

Secondly, we prove that if $p_i-\theta_i\epsilon_i\geq p_{i+1}-\theta_i\epsilon_{i+1}$ satisfies, then $p_i-\theta_i\epsilon_i\geq p_j-\theta_i\epsilon_j$ hold for all $j\in\{i+1,i+2,\cdots,k\}$.

Similar to the proof of the first step, we have
\begin{align*}
p_i-\theta_i\epsilon_i&\geq p_{i+1}-\theta_i\epsilon_{i+1}
\geq\cdots
\geq p_1-\theta_i\epsilon_1,
\end{align*}
which lead to the correctness of this step. Notice that for an optimal contract, we have $p_i-\theta_i\epsilon_i=p_{i+1}-\theta_i\epsilon_{i+1}$ holds, since otherwise, we can always find a larger $\epsilon_i$ to achieve a better aggregation accuracy until the equal signs hold.

Thirdly, we prove that $p_i-\theta_i\epsilon_i= p_{i+1}-\theta_i\epsilon_{i+1}$ implies $p_i-\theta_i\epsilon_i\geq p_{i-1}-\theta_i\epsilon_{i-1}$.

It is obvious that $\theta_i(\epsilon_{i-1}-\epsilon_i)\geq\theta_{i-1}(\epsilon_{i-1}-\epsilon_i)$, rearrange this inequality, we have
\begin{align*}
p_i-\theta_i\epsilon_i\geq p_i+\theta_{i-1}\epsilon_{i-1}-\theta_{i-1}\epsilon_i-\theta_i\epsilon_{i-1}.
\end{align*}

Since $p_i-\theta_i\epsilon_i= p_{i+1}-\theta_i\epsilon_{i+1}$, then $p_{i-1}-\theta_{i-1}\epsilon_{i-1}= p_i-\theta_{i-1}\epsilon_i$ hold, i.e., $p_i+\theta_{i-1}\epsilon_{i-1}-\theta_{i-1}\epsilon_i=p_{i-1}$. Thus, we have $p_i-\theta_i\epsilon_i\geq p_{i-1}-\theta_i\epsilon_{i-1}$.

In summary, $p_i-\theta_i\epsilon_i= p_{i+1}-\theta_i\epsilon_{i+1}$ implies $p_i-\theta_i\epsilon_i\geq p_j-\theta_i\epsilon_j,\forall j\neq i$, which end the proof of this lemma.
\end{proof}

\section{Proof of Theorem 4.6}\label{appendix:4}
\begin{proof}
Based on (\ref{formula:psysincom1_2}) and (\ref{formula:psysincom1_3}), we have
\begin{align}
p_{k-1}-\theta_{k-1}\epsilon_{k-1}&=p_k-\theta_{k-1}\epsilon_k \nonumber \\
&=\theta_k\epsilon_k-\theta_{k-1}\epsilon_k \nonumber \\
&=(\theta_k-\theta_{k-1})\epsilon_k \label{formula:recursive}
\end{align}

Let $\Delta\theta_k=\theta_k-\theta_{k-1}$, we can rewrite (\ref{formula:recursive}) as $p_{k-1}=\theta_{k-1}\epsilon_{k-1}+\Delta\theta_k\epsilon_k$.

Following the same procedure, we can conclude that
\begin{align}\label{formula:pi}
p_i=\left\{\begin{array}{ll}\theta_i\epsilon_i+\sum_{j=i+1}^k\Delta\theta_j\epsilon_j,&i\neq k, \\
\theta_k\epsilon_k,&i=k,\end{array}\right.
\end{align}
where $\Delta\theta_i$ is defined by (\ref{formula:deltatheta}).

Then, we have
\begin{align*}
\sum_{i=1}^k\lambda_ip_i
=&\sum_{i=1}^{k-1}[\lambda_i\theta_i\epsilon_i+\lambda_i\sum_{j=i+1}^k\Delta\theta_j\epsilon_j]
+\lambda_k\theta_k\epsilon_k \\
=&\lambda_k\theta_k\epsilon_k +\lambda_{k-1}\theta_{k-1}\epsilon_{k-1}+\lambda_{k-1}\Delta\theta_k\epsilon_k \\
&+\lambda_{k-2}\theta_{k-2}\epsilon_{k-2}+\lambda_{k-2}[\Delta\theta_{k-1}\epsilon_{k-1}+\Delta\theta_k\epsilon_k] \\
&\vdots \\
&+\lambda_1\theta_1\epsilon_1+\lambda_1[\Delta\theta_2\epsilon_2+\cdots+\Delta\theta_k\epsilon_k] \\
=&\epsilon_k[\lambda_k\theta_k+\Delta\theta_k(\lambda_{k-1}+\cdots+\lambda_1)] \\
&+\epsilon_{k-1}[\lambda_{k-1}\theta_{k-1}+\Delta\theta_{k-1}(\lambda_{k-2}+\cdots+\lambda_1)] \\
&\vdots \\
&+\epsilon_1\lambda_1\theta_1.
%\label{formula:p2epsilon}
\end{align*}

Rearrange the above equation by $\epsilon_i$, we can get
\begin{align}\label{formula:epsilon}
\sum_{i=1}^k\lambda_ip_i=\sum_{i=1}^kH_i\epsilon_i=B,
\end{align}
where $H_i$ is defined by (\ref{formula:Hi}).

Thus, the Lagrangian of Problem \ref{problem:systemgoalincom1} is
\begin{align*}
L(\epsilon,\alpha)=\sum_{i=1}^k[\frac{\lambda_i}{\epsilon_i^2}+\alpha H_i\epsilon_i]-\alpha B,
\end{align*}
where $\alpha$ is the Lagrangian multiplier.

Based on the KKT condition, we have
\begin{align*}
\frac{\partial{L}}{\partial{\epsilon_i}}=\frac{-2\lambda_i}{\epsilon_i^3}+\alpha H_i=0.
\end{align*}

Then, we can calculate $\epsilon_i$ as %$\epsilon_i=\sqrt[3]{\frac{2}{\alpha}}(\frac{\lambda_i}{H_i})^{\frac{1}{3}}$.
\begin{align}\label{formula:epsilon_i}
\epsilon_i=\sqrt[3]{\frac{2}{\alpha}}(\frac{\lambda_i}{H_i})^{\frac{1}{3}}
\end{align}
Substituting (\ref{formula:epsilon_i}) to (\ref{formula:epsilon}), we obtain %$\sqrt[3]{\frac{2}{\alpha}}=\frac{B}{\sum_{j=1}^kH_j^{\frac{2}{3}}\lambda_j^{\frac{1}{3}}}$.
\begin{align*}
\sqrt[3]{\frac{2}{\alpha}}=\frac{B}{\sum_{j=1}^kH_j^{\frac{2}{3}}\lambda_j^{\frac{1}{3}}}
\end{align*}

Thus, the optimal contract $\epsilon_i^*$ is given by
\begin{align}\label{formula:epsilonistar}
\epsilon_i^*=\frac{B}{\sum_{i=1}^kH_i^{\frac{2}{3}}\lambda_i^{\frac{1}{3}}}
H_i^{-\frac{1}{3}}\lambda_i^{\frac{1}{3}}
\end{align}

Then, we can calculate the $k$-th contract as,
\begin{align*}
p_k^*=\theta_k\epsilon_k^*=\frac{B}{\sum_{j=1}^kH_j^{\frac{2}{3}}\lambda_j^{\frac{1}{3}}}
\theta_kH_k^{-\frac{1}{3}}\lambda_k^{\frac{1}{3}}.
\end{align*}

Substitute (\ref{formula:epsilonistar}) to (\ref{formula:pi}) and rearrange, we can achieve other contracts when $i\neq k$.
%\begin{align*}
%p_i^*&=\frac{B}{\sum_{j=1}^kH_j^{\frac{2}{3}}\lambda_j^{\frac{1}{3}}}
%(\theta_iH_i^{-\frac{1}{3}}\lambda_i^{\frac{1}{3}}+
%\sum_{j=i+1}^k\Delta\theta_jH_j^{-\frac{1}{3}}\lambda_j^{\frac{1}{3}}).
%\end{align*}
\end{proof}

%-----------------------------------------------------------------------------------------

\end{document}